\definecolor{myorange}{RGB}{255,165,0}
\definecolor{bur}{rgb}{0.5, 0.0, 0.13}
\definecolor{ngreen}{rgb}{0.5, 0.6, 0.35}
\newcommand\blk[1]{\color{black}#1}
\newcommand\mineig[1]{\xi_{\mathrm{min}} \left(#1\right)}
\newcommand\maxeig[1]{\xi_{\mathrm{max}} \left(#1\right)}
\newcommand{\eref}{Eq.~\eqref}
\newtheorem{theorem}{Theorem}
\newtheorem{lemma}[theorem]{Lemma}
\begin{document}
\title{
Detection-loophole-free nonlocality in the simplest scenario
}

\author{Nandana T Raveendranath}%
\affiliation{Quantum and Advanced Technologies Research Institute, Griffith University, Yuggera Country, Brisbane, QLD 4111, Australia}
\author{Travis J. Baker}
\email{dr.travis.j.baker@gmail.com}
\affiliation{Quantum and Advanced Technologies Research Institute, Griffith University, Yuggera Country, Brisbane, QLD 4111, Australia}
\affiliation{%
Nanyang Quantum Hub, School of Physical and Mathematical Sciences, Nanyang Technological
University, Singapore 637371
}%
\author{Emanuele Polino} \email{e.polino@griffith.edu.au}
\author{Marwan Haddara}%
\affiliation{Quantum and Advanced Technologies Research Institute, Griffith University, Yuggera Country, Brisbane, QLD 4111, Australia}
\author{Lynden K. Shalm}
\author{Varun B. Verma}
\affiliation{National Institute of Standards and Technology, 325 Broadway, Boulder, Colorado 80305, USA}
\author{Geoff J. Pryde}%
\author{Sergei Slussarenko}%
\author{Howard M. Wiseman}%
\author{Nora Tischler}%
\email{n.tischler@griffith.edu.au}
\affiliation{Quantum and Advanced Technologies Research Institute, Griffith University, Yuggera Country, Brisbane, QLD 4111, Australia}

\begin{abstract}

Loophole-free quantum nonlocality often demands experiments
with high complexity (defined by all parties’ settings and outcomes) and multiple efficient detectors. 
Here, we identify the fundamental efficiency and complexity thresholds for quantum steering using two-qubit entangled states. 
Remarkably, it requires only one photon detector on the untrusted side, with efficiency $\epsilon > 1/X$, where $X\geq2$ is the number of settings on that side. 
This threshold applies to all pure entangled states, in contrast to analogous Bell-nonlocality tests, which require almost unentangled states to be loss-tolerant. 
We confirm these predictions in a minimal-complexity ($X = 2$ for the untrusted party and a single three-outcome measurement
for the trusted party), detection-loophole-free photonic experiment with $\epsilon = (51.6 \pm 0.4)\%$.
\end{abstract}
\maketitle

Quantum nonlocality plays a key role in the foundations of quantum physics and is an essential resource for emerging quantum information science applications~\cite{RevModPhys.89.015004,RevModPhys.94.025008,branciard2012one,PhysRevLett.97.120405,comandar2016quantum,zeng2025steering,joch2022certified,li2024randomness,kavuriTraceableRandomNumbers2025a}. 
Observing behaviours generated by entangled quantum states requires the parties that share the state, say Alice and Bob, to perform suitably chosen local measurements, each with two or more possible outcomes.
To rigorously verify nonlocal correlations, one cannot trust the measurement devices of all parties. 
This lack of trust entails a detection loophole~\cite{larsson2014loopholes,gisin1999local} 
that must be closed to prevent any untrusted party from mimicking the statistics of nonlocal correlations without entanglement. 
Closing loopholes is essential for ensuring genuine nonlocality~\cite{christensenDetectionLoopholeFreeTestQuantum2013,shalmStrongLoopholeFreeTest2015,PhysRevLett.115.250401, Hen15,smith2012conclusive,wittmann2012loophole,fuwa2015experimental,guerreiro2016demonstration,cavailles2018demonstration,srivastav2022quick,slussarenko2022quantum,zeng2025steering}, which is crucial for device-independent applications~\cite{RevModPhys.89.015004,RevModPhys.94.025008,branciard2012one, PhysRevLett.97.120405,comandar2016quantum,zeng2025steering,joch2022certified,li2024randomness,kavuriTraceableRandomNumbers2025a}.

Key resources for detection-loophole-free nonlocality demonstrations are: (1) experimental complexity, quantified by the number of possible detection patterns (settings and outcomes of all parties together), and (2) the minimum required detection efficiency.
If Alice (Bob) has $X\;(Y)$ settings, each with $A\;(B)$ possible outcomes,
the complexity cost \cite{Saunders2012} is 
\begin{equation}{\label{Complexity_cost}}
W \coloneqq A^X B^Y \;.
\end{equation}
Loophole-free Bell-nonlocality \cite{brunner2014bell}, where both the parties are untrusted, has been demonstrated with a minimum complexity cost of $W=16$ (two parties, two settings, and two outcomes per party) \cite{shalmStrongLoopholeFreeTest2015, Hen15, PhysRevLett.115.250401}. Of these,  the photonic experiments \cite{shalmStrongLoopholeFreeTest2015,PhysRevLett.115.250401} made do with $A=B=2$, by assigning null results deterministically to one of the non-null results. 
This allowed them to use only a {\em single} high-efficiency photon detector on each side (efficiency $>2/3$~\cite{Eberhard1993}). However, this strategy comes at the expense of using non-maximally entangled states, which are more noise-sensitive. 

Yet, nonlocality can be observed at a lower complexity cost~\cite{Saunders2012}, using quantum steering \cite{schrodinger1935discussion}, 
also known as EPR-steering after Einstein, Podolsky, and Rosen~\cite{EPR35}. 
Quantum steering as now formalized~\cite{Wis07,Rei09,Uol20,Cav09,Cav16} is a form of nonlocality requiring trust to be placed on one party, who uses well-characterized quantum devices to describe their local system.
The untrusted party can remotely prepare different ensembles received by the trusted party using a choice of distinct measurement settings~\cite{EPR35,Wis07}. 
The minimal-cost quantum steering test, requiring only two dichotomic measurements for the untrusted party and a single three-outcome measurement for the trusted party ($W=12$), was proposed in Ref.~\cite{Saunders2012}. 
However, the experimental demonstration in Ref.~\cite{Saunders2012} ignored non-detection events, and therefore left the detection loophole open. 
On the other hand, steering without this fair sampling assumption has been performed~\cite{bennet2012arbitrarily,wittmann2012loophole,smith2012conclusive}, 
using two photon detectors with heralding  efficiencies exceeding the threshold of $1/X$ on the untrusted side~\cite{bennet2012arbitrarily}. However, these loss-tolerant tests did not minimize the complexity cost; the inclusion of a distinct ``null'' measurement outcome (when neither of the untrusted detectors click)~\cite{bennet2012arbitrarily,evans2013loss} increases $A$ in \eref{Complexity_cost} from 2 to 3. The efficiency thresholds for minimal-complexity steering have, until now, been unknown.

Here, we determine these fundamental efficiency bounds and identify the entanglement resources required to attain them.
Specifically, we find that the efficiency thresholds of $1/X$ for loss-tolerant photonic quantum steering tests can be attained without increasing the complexity cost. 
This complexity cost is achieved via an experimentally convenient scenario where the untrusted party requires only a single detector. 
The trusted party has three detectors, which can have arbitrarily low efficiency as usual. 
In particular, the minimum complexity cost of $W=12$ can be achieved while closing the detection loophole as long as the one-photon detector used by Alice has an efficiency above $50\%$.
We perform an experiment demonstrating detection-loophole-free quantum steering in the simplest possible scenario. 
Notably, we achieve steering close to the minimum detection threshold, with an efficiency as low as $51.6 \pm 0.4\%$.
A novel family of one-detector loss-tolerant steering inequalities is derived, which permits demonstrations arbitrarily close to the fundamental efficiency bound in the simplest scenario.
Moreover, we show that the one-detector efficiency bound of $1/X$ can be attained for \emph{any} pure entangled state shared by Alice and Bob (and thus for entangled qubits in particular).
This property stands in stark contrast to Bell-nonlocality, where almost unentangled states are necessary to achieve the fundamental (Eberhard) efficiency bound of $2/3$, in the analogous minimal complexity test~\cite{Eberhard1993}. 
 
\emph{Quantum steering with one detector on the untrusted side.---}  
We consider a two-party steering scenario under the restriction that the untrusted party (Alice) has access to only \emph{one} detector, 
which projects her system into a rank-one subspace with efficiency $\epsilon$.
Each measurement Alice can perform, labelled $x=0,\dots,X-1$ can be referred to as a \emph{one-click} measurement, since only the ``click'' outcome, defined as $a=+$, is registered as a detection event by Alice.
All non-detection events, including system losses into any other modes not coupled to the detector, are labelled as null outcome, $a=\emptyset$.

Our goal is to decide whether quantum steering tests are possible with one-click measurements, given $\epsilon$.
To this end, suppose Alice and Bob share a pure bipartite state $\ket{\Psi_{\mathrm{AB}}}$, with local system dimensions $d_\mathrm{A}$ and $d_\mathrm{B}$.
The set of Bob's unnormalized states conditioned on Alice's outcome $a$ and setting $x$, $\{ \sigma_{a|x} \}_{a,x}$ (the subscript meaning the set ranges over all $a$ and $x$), is commonly called an assemblage. 
Such an assemblage demonstrates quantum steering if it precludes the existence of a local-hidden-state (LHS) model. This LHS model consists of an ensemble of quantum states $\{p_\lambda,\rho_\lambda\}_\lambda$ and probability distributions over $a$, $\{ P_\lambda(a|x)\}_{a,x}$ such that $\sigma_{a|x} = \sum_\lambda p_\lambda P_\lambda(a|x) \rho_\lambda \ \forall \ ~a,~x$ \cite{Wis07}.

In Section \ref{sm:1/X_bound} of the Supplemental Material (SM), we prove the following facts about assemblages prepared by one-click measurements on Alice's system.
First, by choosing her one-click effect to be proportional to a rank-one projector, 
$E_{+|x} = \epsilon~\Pi_{+|x}$, a necessary and sufficient condition for Alice to steer Bob is that her detector efficiency surpasses the threshold
\begin{equation}
\epsilon > \left[\maxeig{\sum_x \Pi_{+|x}}\right]^{-1}~,
\label{spectral_rad}
\end{equation}
where $\maxeig{\cdot}$ denotes maximum eigenvalue. 
This lower bound can be expressed in terms of the number of measurements performed by the untrusted party, $X$, and all pairwise overlaps of the projectors, 
$\Tr(\Pi_{+|x}\Pi_{+|x'})$; see Lemma~2 in the SM. 
The measurement overlap quantifies how distinct Alice's click-conditioned measurement effects are.
From this observation, we compute (see Theorem~3 in the SM) the infimum of the 
lower bound in Eq.~(\ref{spectral_rad}) to be $1/X$. 
That is, there exists a one-click measurement strategy defined by exactly $X$ projectors $\{ \Pi_{+|x}\}$, so that steering is always possible for 
\begin{equation} \label{eq:cutoff_efficiency}
\epsilon > \frac{1}{X} \;.
\end{equation}
Since steering is known to be impossible (regardless of the complexity of setup) for $\epsilon\leq 1/X$~\cite{bennet2012arbitrarily}, this defines a fundamental limit on efficiency thresholds required to show quantum steering with one-click measurements on Alice's side. 
This may surprise, since the proof in Ref.~\cite{bennet2012arbitrarily} that permits closing the detection loophole above the $1/X$ bound required two detectors---or equivalently two non-null measurement outcomes---at the untrusted side.
Interestingly, Eq.~\eqref{eq:cutoff_efficiency} is approached from above by taking the limit where the $\Pi_{+|x}$ 
converge, with $\Tr(\Pi_{+|x}\Pi_{+|x'})\to 1 \ \forall \ \ x,\ x'$. 
This is contrary to the intuition that the power of steering is maximized by Alice measuring maximally different observables~\cite{Eva14b, Bav17, designolle2019incompatibility}, but it aligns with Eberhard's result for maximal loss-tolerance in Bell-nonlocality~\cite{Eberhard1993}, as we discuss in a later Section. 
Furthermore, our measurement construction does not require Bob to perform complete tomography of his steered states; in fact, he needs only to implement a single three-outcome measurement.
Together, these results imply a striking conclusion: by choosing $X=2$, quantum steering can be made loophole-free, with just one detector above the fundamental efficiency threshold of $1/2$, and with a complexity cost of $W=12$, as low as when making the fair sampling assumption (as in Ref.~\cite{Saunders2012}).

To demonstrate steering at minimal complexity ($X=2$ one-click measurements), we analytically derive the corresponding family of optimal steering witnesses. 
For a given assemblage $\{\sigma_{a|x}\}_{a,x}$, a steering witness is specified by a set of Hermitian operators $\{ F_{a|x} \}_{a,x}$ such that
\begin{equation}
\sum\limits_{a,x} \Tr \left( F_{a|x} \sigma_{a|x}\right) \geq 0 \;
\label{eq:SE}
\end{equation}
for all non-steerable assemblages, with any violation certifying quantum steering~\cite{Pus13, Cav16,skrzypczyk2023semidefinite}.
In Section~\ref{sm:witness_derivation} of the SM, we construct an explicit family of witnesses $\{F^\star_{a,x} \}_{a,x}$, whose form depends on the spectrum of Bob's reduced state, the detector efficiency $\epsilon$, and the overlap between Alice's two ``click'' effects.
This analytic construction [see Eq.~\eqref{eq:SE_val_exact} in the SM] asymptotically approaches the efficiency bound of Eq.~\eqref{eq:cutoff_efficiency} as $\Tr(\Pi_{+|0}\Pi_{+|1})\to 1$, allowing violation for any pure entangled state with $\epsilon >1/2$.

\emph{Experimental implementation.---}
We experimentally implement the above simplest protocol for detection-loophole-free quantum steering with photons.
Our setup consists of three parts: the source, the untrusted party (Alice), and the trusted party (Bob), as depicted in Fig.~\ref{fig:experimental schematic}. A high-efficiency source of entangled states, following the approach of \cite{Tis18, shalmStrongLoopholeFreeTest2015}, is implemented to obtain polarization-entangled photon pairs at 1550 nm. 
The target state has the form
\begin{equation}\label{state}
\ket{\Phi^{+}_\alpha} = \cos(\alpha)\ket{HH} + \sin(\alpha)\ket{VV},
\end{equation}
where $|H \rangle$ and $|V \rangle$ represent horizontal and vertical polarizations, respectively, and $\alpha\in[0,\pi/4]$ is the tunable parameter determining the amount of entanglement in the state.
Our experimentally generated state has a fidelity of $0.9953 \pm 0.0006$ with the maximally entangled state when we set $\alpha = \pi/4$.
The two photons are sent to two parties, Alice and Bob, for measurement.
Bob performs a trine measurement with a fixed three-outcome POVM \cite{clarkeExperimentalRealizationOptimal2001,Saunders2012}.
Since Bob's measurements are trusted, the detection events associated with any outcomes on his side herald the trial, requiring Alice to announce her outcome for the trial.
Alice's outcomes are either ``click", when she detects a photon, or ``null", when she does not. 
The single detector on Alice's side performs these one-click measurements defined by the projectors $\Pi^\theta_x = \ketbra{e^\theta_x}$, with 
$\ket{e^\theta_x} \coloneqq \cos(\theta) \ket{H} + (-1)^x\sin(\theta) \ket{V}$,
for some $\theta$ implemented by the motorized half-wave plate (HWP) rotated by an angle $\theta/2$.   
Alice's detector needs to be
moderately efficient so that Alice's detection efficiency surpasses the threshold of 0.5, whereas Bob's detectors can be arbitrarily inefficient since he is trusted.

\begin{figure}[htbp]
\begin{center}
\includegraphics[width=\linewidth]{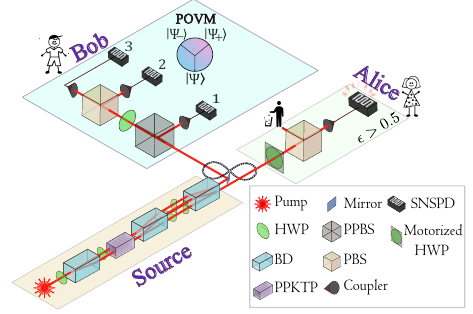}
\caption 
{\footnotesize{\textbf{Experimental setup.} The source generates 1550 nm polarization-entangled photons via SPDC (spontaneous parametric down-conversion) in a nonlinear crystal (PPKTP) embedded in a Mach-Zehnder interferometer realized by beam displacers (BDs). The entangled photons are sent to Alice and Bob. Alice performs a one-click measurement using a half-waveplate (HWP), a polarizing beam splitter (PBS), and only one detector. Bob's three-outcome POVM is implemented using a partially polarizing beam splitter (PPBS), a HWP, and a PBS with outcomes corresponding to polarizations in the X-Z plane of the Bloch sphere. }}
\label{fig:experimental schematic}
\end{center}
\end{figure}

\begin{figure}[htbp]
\begin{center}
\includegraphics[width= \linewidth]{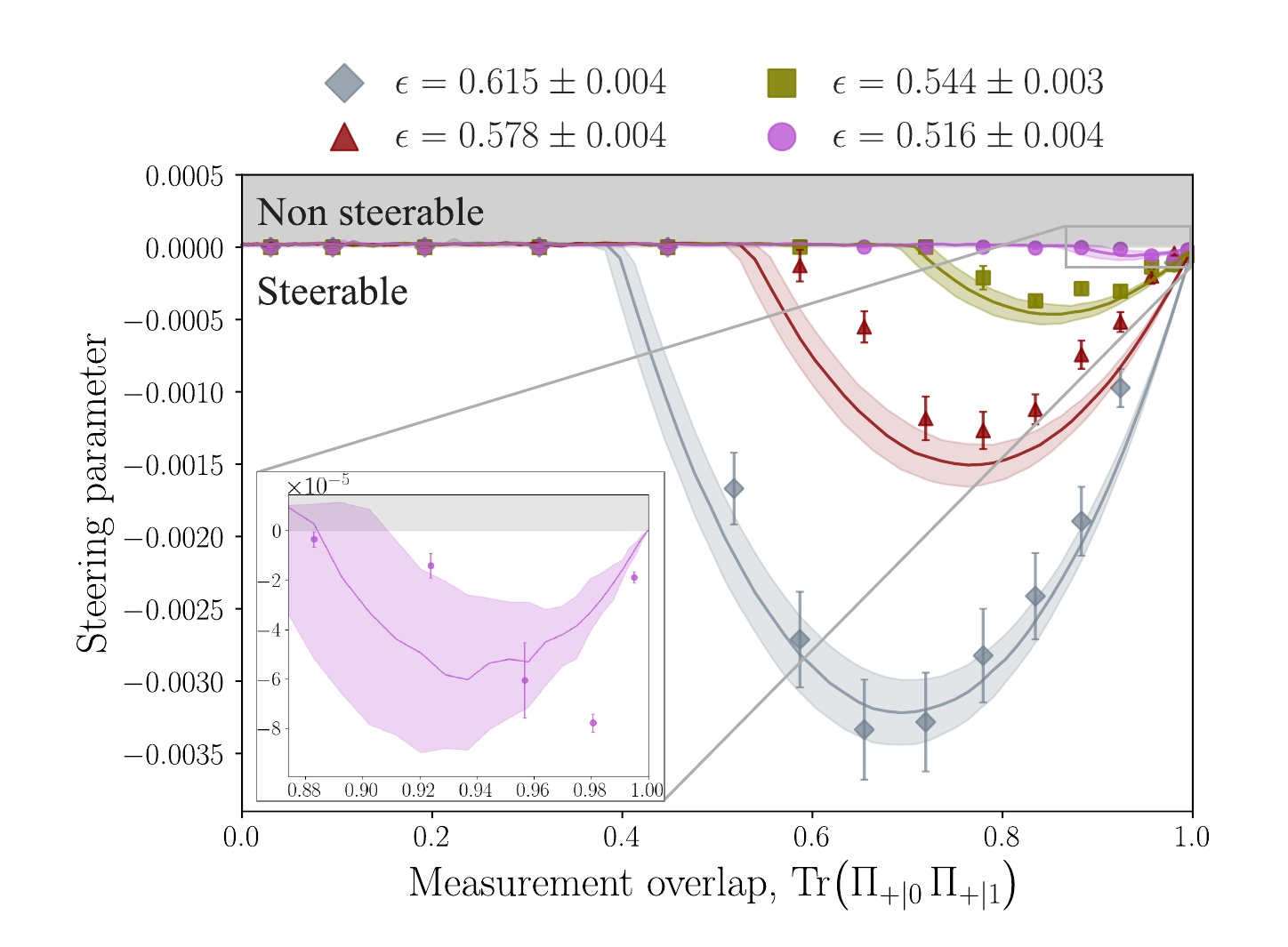}
\caption{\footnotesize 
Steering parameters versus Alice's measurement overlap for maximally entangled states. 
Points in the white region indicate a steering violation. 
The solid curves show theoretical predictions for an ideal state, Alice's projectors and Bob's POVM, based on the experimentally measured efficiencies; the shaded regions represent the uncertainty for those predictions based on the uncertainty in efficiency. 
Markers indicate the experimental data, along with the associated error estimated as $\pm1$ standard deviation, obtained by repeating the measurements 10 times (The values of minimum steering parameters and their errors are provided in SM Table \ref{tab:violation values}). 
As the efficiency decreases, Alice's measurements are required to have a high overlap to steer Bob. The inset zooms in on the lowest efficiency curve, with $\epsilon = 0.516 \pm 0.004$.
}
\label{balanced}
\end{center}
\end{figure}

From the experimental data, we reconstruct each element $\sigma_{a|x}$ of Bob's assemblage via a constrained maximum-likelihood estimation technique (see SM Section \ref{sm:experiment_details}), and 
compute the left side of Eq.~\eqref{eq:SE}.
We refer to this value as the steering parameter---a negative value unambiguously certifies quantum steering.
Steering tests are repeated for different detection efficiencies using a maximally entangled state ($\alpha=\pi / 4$ in Eq.~\eqref{state}). Both theoretically predicted and experimentally obtained steering parameters are shown as a function of the overlap between measurement settings of the untrusted party in Fig.~\ref{balanced}. 
As we approach lower efficiency values toward the ultimate threshold, 
we see that higher measurement overlaps are required to witness steering. 
We observe a steering parameter of $(-7.79 \pm 0.37) \times 10^{-5}$, violating the non-steerability bound by more than 21 standard deviations, for an efficiency of $\epsilon = 0.516 \pm 0.004$ (see inset of Fig.~\ref{balanced}). This efficiency is very close to the minimum efficiency bound, which shows the robustness of our protocol. 
We also report similar experimental violations using less entangled states, by taking $\alpha \rightarrow0$ in Eq.~\eqref{state}, but defer discussion to Section \ref{sm:experiment_details} of the SM.

\begin{figure}[htbp]
\begin{center}
\includegraphics[width=\linewidth]{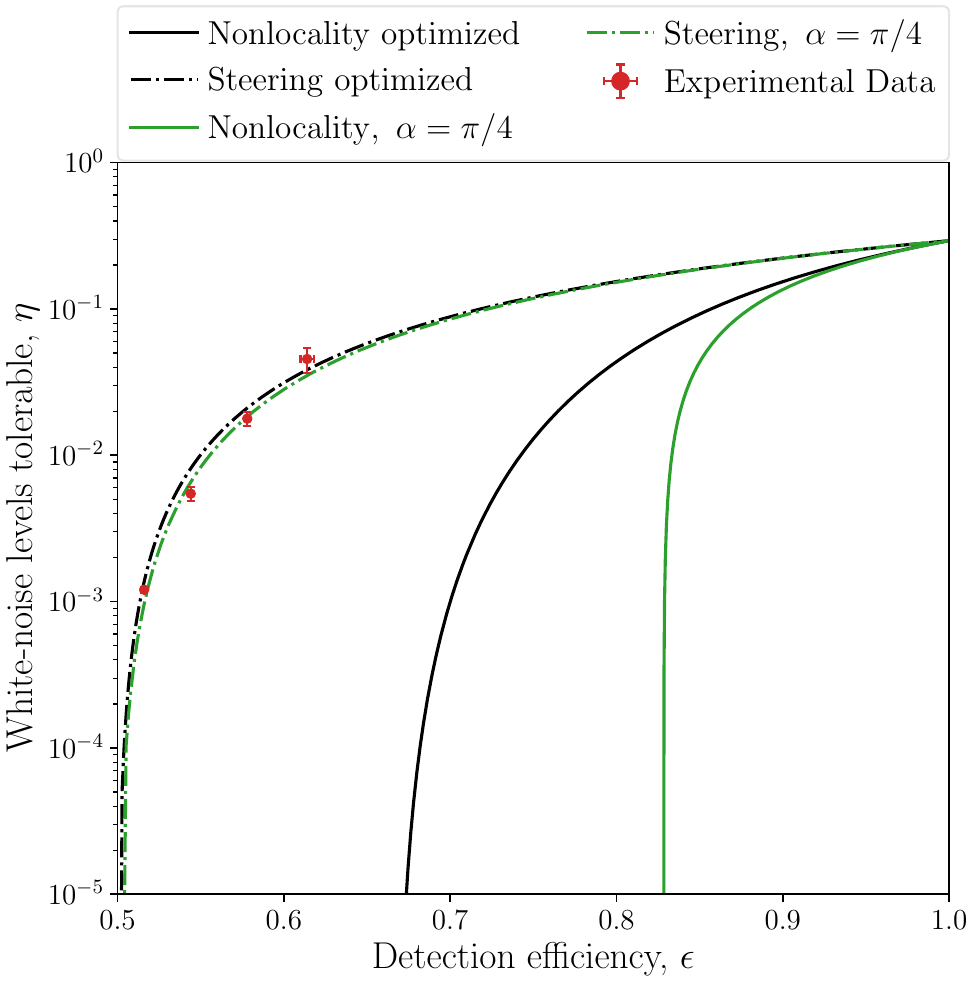}
\caption{\footnotesize 
White-noise levels  ($\eta$) tolerable for the simplest bipartite demonstrations of nonlocality, as a function of the detection efficiency $\epsilon$. 
Solid curves are for Bell-nonlocality, and dashed curves for EPR-steering. 
Green is for maximally entangled states and black for optimal states. Red points are experimentally measured white-noise robustness of steering with
1 standard deviation error bars. For details, see text. }
\label{fig:Eberhard plot}
\end{center}
\end{figure}

\emph{Entanglement and noise robustness in steering.---}
Eberhard’s seminal work~\cite{Eberhard1993} revealed that the most loss-tolerant demonstrations of Bell-nonlocality require an efficiency $\epsilon > 2/3$, and can be implemented with $A=B=X=Y=2$, giving $W=16$ in~\eref{Complexity_cost}.
In SM Sec.~\ref{sec:eberhard_overview} we provide an overview of Eberhard's work, and independently prove the $2/3$ efficiency threshold from below, using a more general black-box (theory-independent) framework.

Approaching Eberhard's efficiency threshold in Bell tests is possible only in the limit where the entanglement of the state Alice and Bob share approaches zero [$\alpha\rightarrow 0$ in Eq.~\eqref{state}], \emph{and} the overlap between the ``click'' projectors approaches unity for both Alice's and Bob's measurements. 
By contrast, if maximally entangled states are used, the efficiency threshold rises to $\epsilon > 2(\sqrt{2}-1) \approx 0.8284$.
This corresponds to the well-known \emph{anomaly} of Bell-nonlocality, first noted by Eberhard~\cite{Eberhard1993}, where the detector-efficiency threshold for closing the detection loophole decreases as the shared entanglement is reduced~\cite{methot2007anomaly}.

We now extend Eberhard’s analysis to establish, for quantum steering in its simplest configuration, the noise levels these nonlocality tests can tolerate, and the entanglement they require.
The white-noise robustness (WNR) is defined as the maximum fraction of white noise $\eta\geq0$ permissible in the entangled state, such that a Bell or steering~\cite{designolle2019incompatibility} inequality violation is possible.
In our analysis, we maximize such WNR over all one-click measurements numerically, and we obtain the results in Fig.~\ref{fig:Eberhard plot}.
The solid lines reproduce Eberhard’s noise-efficiency trade-offs for Bell-nonlocality \cite{Eberhard1993}, when measuring maximally entangled states (green), and when both the states and measurements are optimized (black); see SM Section~\ref{sec:eberhard_overview} for details. 
Similarly, Fig.~\ref{fig:Eberhard plot} shows numerically optimized trade-offs for steering restricted to maximally entangled states (dashed green curve), and optimized over all states and measurements (dashed black curve). 
For comparison, we estimate the WNR of the four experimental points maximally violating a steering inequality in Fig.~\ref{balanced}, and display these values as red points in Fig.~\ref{fig:Eberhard plot}, in close agreement with the corresponding theoretical prediction (green dashed curve).
We refer the reader to Section~\ref{sec:noisy_assemblage_witness} of the SM for details.

The most salient feature of Fig.~\ref{fig:Eberhard plot} is that the simplest quantum steering can be performed in regimes of noise and detection-efficiency that are inaccessible to the simplest Bell-nonlocality. Another clear difference is that  
the maximally entangled states (green) attain close-to-maximal (black) noise robustness in the steering case (dashed). 
Specifically, the efficiency thresholds (the intersection of the curves with the abscissa, to a good approximation) of maximally entangled states and the optimal state coincide. By contrast, in Bell-nonlocality, the efficiency thresholds of maximally entangled and optimal states differ greatly. Thus, the anomaly identified by Eberhard for Bell-nonlocality, between efficiency thresholds and entanglement, entirely vanishes for the simplest loophole-free steering. Indeed, once the detection efficiency exceeds the threshold, \emph{all} pure entangled states suffice to demonstrate quantum steering; see Theorem~3 in Section \ref{sm:1/X_bound} of the SM.
\blk

\emph{Discussions and conclusion.---} In this work, we have theoretically and experimentally found the minimal requirements for demonstrating quantum steering in the presence of detection inefficiencies.
Our protocol is, by far, the simplest to show detection-loophole-free steering, 
both in formal terms of minimum complexity cost and also in terms of having minimal experimental requirements, of only a single moderately high 
efficiency photon detector. 

By deriving a family of steering inequalities, analogous to Eberhard's Bell inequality, we also determine that the detection efficiency threshold for the simplest loophole-free quantum steering experiment is $50\%$, and we apply it to demonstrate EPR-steering with an efficiency of only $51.6\%$.
We discover---rather surprisingly---that for pure states this efficiency threshold is independent of how much entanglement the state has (as long as it is nonzero), unlike Bell-nonlocality, where the minimum efficiency bound to close the detection loophole is only available for almost unentangled states.
Moreover, we show that using a single detector on the untrusted side, of efficiency above $50\%$, incurs no penalty in terms of complexity compared to the idealized experiment with two $100\%$-efficient detectors.

Further investigations are necessary to determine if \textit{one-click} measurements are optimal in the case of a 
general steering scenario, where more than two stations and high-dimensional quantum states are involved \cite{srivastav2022quick,PhysRevLett.115.010402,PhysRevLett.111.250403, de2023complete,PhysRevA.84.032115,hao2022demonstrating,PhysRevLett.131.110201,PhysRevLett.132.210202,d2025semidefinite,pepper2024scalable}. 
In particular, an interesting direction is to study minimal requirements and noise tolerance in complex quantum networks involving independent sources \cite{jones2021network,sarkar2024network,li2025detecting,li2025measuring}.
Quantum steering is the foundational concept behind secure technologies like one-sided device-independent quantum communication and quantum key distribution protocols \cite{branciard2012one, slussarenko2022quantum,PhysRevLett.97.120405,zeng2025steering,comandar2016quantum} 
and certified randomness generation \cite{joch2022certified,li2024randomness}. 
Our results establish an operational benchmark for the resources required to faithfully implement these technologies, in terms of experimental complexity, and the requirements on the devices involved.

\emph{Acknowledgements.---}
This work was supported by the Australian Research Council Centre of Excellence Grant No.~CE170100012;  N.T.~is a recipient of an Australian Research Council Discovery Early Career Researcher Award (DE220101082). 
N.T.R.~and M.H.~acknowledge support by the Australian Government Research Training Program (RTP). M.H. acknowledges useful discussions with Eric G. Cavalcanti. E.P.~is a recipient of an Australian Research Council Discovery Early Career Researcher Award (DE250100762). This material is based upon work supported by the Air Force Office of Scientific Research under Award No.~FA2386-23-1-4086 (held by N.T., H.M.W., and T.J.B.).

%


\clearpage
\pagebreak
\widetext
\begin{center}
\textbf{\large Supplemental Material for: Detection-loophole-free nonlocality in the simplest scenario}\\[1.2em]

\begin{small}

Nandana T Raveendranath$^{1}$,
Travis J. Baker$^{1,2}$, 
Emanuele Polino$^{1}$,
Marwan Haddara$^{1}$,
Lynden K. Shalm$^{3}$,
Varun B. Verma$^{3}$
Geoff J. Pryde$^{1}$,
Sergei Slussarenko$^{1}$,
Howard M. Wiseman$^{1}$,
and Nora Tischler$^{1}$

\vspace{0.3cm}

\textit{$^{1}$\,Quantum and Advanced Technologies Research Institute, \\ Griffith University, Yuggera Country, Brisbane, QLD 4111, Australia}\\[0.3em]

\textit{$^{2}$\,Nanyang Quantum Hub, School of Physical and Mathematical Sciences,\\
Nanyang Technological University, Singapore 637371}\\[0.3em]

\textit{$^{3}$\,National Institute of Standards and Technology,\\
325 Broadway, Boulder, Colorado 80305, USA}
\end{small}

\end{center}

\setcounter{equation}{0}
\setcounter{figure}{0}
\setcounter{table}{0}
\setcounter{page}{1}
\makeatletter
\renewcommand{\theequation}{S\arabic{equation}}
\renewcommand{\thefigure}{S\arabic{figure}}
\renewcommand{\thetable}{S\arabic{table}}

\setcounter{secnumdepth}{2}

\tableofcontents

\section{Fundamental limits for steering with one detector}
\label{sm:1/X_bound}

First, we determine the fundamental thresholds for steering with one detector.
These thresholds apply to any number of one-click measurements, $X$. 
To determine cutoff efficiencies for steerability, we assume the probability of a detection event is independent of $x$, well represented by an average $\epsilon$.

From the Schmidt decomposition, there always exist sets of orthonormal bases $\{\ket{\alpha_i}\}_i$ and $\{\ket{\beta_i}\}_i$, such that $\ket{\Psi_{\mathrm{AB}}} = \sum_{i=0}^{d-1} \sqrt{\lambda_i} \ket{\alpha_i} \otimes \ket{\beta_i}$ for $d \coloneqq \min\{d_\mathrm{A}, d_\mathrm{B}\}$ and $\sum_i | \lambda_i | = 1$.
Moreover, any $\ket{\Psi_{\mathrm{AB}}}$ with a reduced state for Bob $\rho_\mathrm{B} = \sum_i \lambda_i \ketbra{\beta_i}$ is related to the maximally entangled state $\ket{\Phi^+} \coloneqq d^{-1/2} \sum_{i} \ket{ii}$ by local operations on Bob as $\ket{\Psi_{\mathrm{AB}}} = ( I \otimes \sqrt{d \rho_\mathrm{B}} V_\mathrm{B})\ket{\Phi^+}$, where $V_\mathrm{B}$ is unitary.
Observe that we can take $V_B=I_B$ without loss of generality, since EPR-steerability is invariant under local unitaries.
Therefore, by steering with one detector, the ``click'' ($a = +$)  outcomes reported by Alice steer Bob's system to the unnormalized states
\begin{align}
\sigma_{+|x} &= \Tr_\mathrm{A} \left[ (E_{+|x} \otimes I_B) \ketbra{\Psi_{\mathrm{AB}}}  \right] \\
&= \epsilon \sqrt{\rho_B} \; \Pi^T_{+|x} \sqrt{\rho_B}\;,
\end{align}
where the transpose is taken with respect to the Schmidt basis of Bob.
Alice's ``no-click'' effect is $\sigma_{\emptyset|x} = \rho_B - \sigma_{+|x}$.
For all entangled $\ket{\Psi_{AB}}$, this is proportional to a mixed state, with purity strictly increasing in $\epsilon$.

Here, we are interested in fundamental constraints on $\epsilon$ for steering with one detector.
We can formalize this as finding the $\epsilon$ as the following optimization problem (see \cite{pepper2024scalable}):
\begin{equation}
\begin{aligned}
\quad & \text{max} & & \epsilon &\\
& \text{s.~t.~} & & \sum_\lambda D_\lambda(+|x)\sigma_\lambda =  \epsilon \sqrt{\rho_B} \; \Pi^T_{+|x} \sqrt{\rho_B} & \forall \  x,\\
& & & \sum_\lambda \sigma_\lambda = \rho_B & \\
& & & \sigma_\lambda \geq 0 & \forall \  \lambda\;.
\end{aligned}
\label{eq:LTSDP_one_click}
\end{equation}
Here, $\{D_\lambda(a|x)\}_\lambda$ is the set of deterministic probability distributions that assign outcome $a$ for each value of $x$.
Since one-click measurements have dichotomic outcomes, there are $2^X$ such distributions.
The solution is the maximal value of the one-click detector efficiency $\epsilon$ such that $\{ \sigma_{+|x} \}_{x}$ admits an LHS model.
The first constraint ensures that the local-hidden-state ensemble $\{p_\lambda, \rho_\lambda\}$, with $\sigma_\lambda = p_\lambda \rho_\lambda$, correctly reproduces Bob's steered states.
The matrix inequality constraints ensure each state in this ensemble is physical, \emph{i.e.}~$\sigma_\lambda/\Tr[\sigma_\lambda]$ is positive semidefinite.

We begin by proving a lower bound for solutions to the optimization problem in \eqref{eq:LTSDP_one_click}, when one-click measurements are made on one half of an arbitrary entangled pure state. 
\begin{lemma}[Cutoff efficiency for one-detector steering tests]
\label{lem:cutoff_eff}
Let $\{ \epsilon \; \Pi_{+|x}\}_{x}$ be a set of rank-one effects defining $X$ one-click measurements, and $\ket{\Psi_{AB}}$ be an entangled state in $\mathcal{H}^{d_\mathrm{A}} \otimes \mathcal{H}^{d_B}$.
A local-hidden-state decomposition of the one-click assemblage exists if and only if 
\begin{equation}
\epsilon \leq \left[\maxeig{\sum_x \Pi_{+|x}}\right]^{-1},
\label{eq:spectral_radius_bound}
\end{equation}
where $\maxeig{A}$ is the maximum eigenvalue of $A$.
\end{lemma}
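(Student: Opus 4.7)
The plan is to recast the LHS feasibility problem in Eq.~\eqref{eq:LTSDP_one_click} as a joint-measurability condition on Alice's click effects, and then use the rank-one structure of the $\Pi_{+|x}$ to extract the spectral threshold. I would tackle the two directions of the ``iff'' separately, with necessity (LHS $\Rightarrow$ bound) being the substantive step. First, I would normalize away Bob's marginal: since $\ket{\Psi_{\mathrm{AB}}}$ is entangled I can restrict attention to the support of $\rho_B$ and treat it as full-rank there. Setting $\tau_\lambda \coloneqq \rho_B^{-1/2}\sigma_\lambda\rho_B^{-1/2}$ turns Eq.~\eqref{eq:LTSDP_one_click} into
\begin{equation}
\sum_{\lambda:\,D_\lambda(+|x)=1}\tau_\lambda \;=\; \epsilon\,\Pi_{+|x}^{T}, \qquad \sum_\lambda \tau_\lambda \;=\; I, \qquad \tau_\lambda \geq 0,
\end{equation}
which is precisely the joint-measurability condition for the dichotomic POVMs $\{\epsilon\,\Pi_{+|x}^{T},\,I-\epsilon\,\Pi_{+|x}^{T}\}_x$: an LHS exists iff the click effects admit a common refinement into a parent POVM $\{\tau_\lambda\}_\lambda$ indexed by deterministic strategies $\lambda\in\{+,\emptyset\}^X$.

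For necessity, the key observation is that each $\epsilon\,\Pi_{+|x}^{T}$ is rank-one PSD, so any $\tau_\lambda\geq 0$ appearing in its marginal sum is Loewner-dominated by a rank-one PSD operator, forcing $\tau_\lambda\propto\Pi_{+|x}^{T}$. When the click directions $\{\ket{e_x}\}_x$ are pairwise non-parallel, this proportionality constraint at two distinct settings $x\neq x'$ can be simultaneously satisfied only by $\tau_\lambda=0$, so only the ``at most one click'' strategies contribute. The marginals then uniquely pin $\tau_{\mathbb{1}_y}=\epsilon\,\Pi_{+|y}^{T}$, and normalization determines $\tau_{\emptyset^X}=I-\epsilon\sum_x\Pi_{+|x}^{T}$. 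Its positivity is equivalent to $\epsilon\,\maxeig{\sum_x \Pi_{+|x}}\leq 1$, which is the bound of Eq.~\eqref{eq:spectral_radius_bound}. Sufficiency follows immediately from the same ansatz: given $\epsilon\leq[\maxeig{\sum_x \Pi_{+|x}}]^{-1}$, the explicit LHS $\sigma_{\mathbb{1}_y}=\epsilon\sqrt{\rho_B}\,\Pi_{+|y}^{T}\sqrt{\rho_B}$, $\sigma_{\emptyset^X}=\sqrt{\rho_B}(I-\epsilon\sum_x\Pi_{+|x}^{T})\sqrt{\rho_B}$, and all other $\sigma_\lambda=0$ satisfies the three SDP constraints by direct verification.

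The hard part will be the rank-one collapse: its justification requires that the $\ket{e_x}$ be pairwise linearly independent, which is the generic situation but not universal. If some directions coincide, $\tau_\lambda$ can be non-zero for multi-click $\lambda$ whose clicks are aligned, and the simple collapse breaks down. I would resolve this either by a merging argument---collapsing parallel settings into a single effective measurement with higher multiplicity, which preserves $\maxeig{\sum_x\Pi_{+|x}}$ and reduces to the non-degenerate case---or by invoking SDP duality on Eq.~\eqref{eq:LTSDP_one_click}: the dual program admits a Hermitian witness whose optimality certificate is exactly $I-\epsilon\sum_x\Pi_{+|x}^{T}\geq 0$, recovering the spectral threshold in a way that aligns with the infimum characterization $\epsilon>1/X$ exploited later in the paper.
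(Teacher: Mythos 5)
Your proof is correct and follows essentially the same route as the paper's: the rank-one structure of the click marginals forces every LHS ensemble member associated with a multi-click strategy to vanish (given pairwise-distinct projectors), pins the single-click members to $\epsilon\sqrt{\rho_B}\,\Pi^T_{+|x}\sqrt{\rho_B}$, and reduces the whole question to positivity of the residual null-only member, which is exactly the spectral condition $\epsilon\sum_x\Pi^T_{+|x}\leq I_B$. Your conjugation by $\rho_B^{-1/2}$ and the joint-measurability phrasing are cosmetic variants of the paper's use of the projectors $N_x=I_B-\ketbra{w_x}$ followed by stripping $\sqrt{\rho_B}$ at the end; the degenerate case you flag (parallel click directions) is excluded by hypothesis in both treatments, and your proposed merging fix would actually change the threshold there (two coincident projectors admit a multi-click strategy and tolerate $\epsilon$ up to $1$, not $1/2$), so it is best omitted rather than patched.
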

\begin{proof}
Define $\Pi_{+|x} \coloneqq \ketbra{e_x}$.
First, observe that every $\sqrt{\rho_B} \; \Pi^T_{+|x} \sqrt{\rho_B}$ is rank one, with eigenvector $\ket{w_x} = \sqrt{\rho_B} \ket{e_x}/\sqrt{\bra{e_x} \rho_B \ket{e_x}}$.
Define the projector $N_{x} \coloneqq I_B - \ketbra{w_x}$.
If the first set of constraints in \eqref{eq:LTSDP_one_click} are satisfied, we require that, $\forall \ ~x$,
\begin{equation}
\sum_\lambda D_\lambda(+|x) N_x \sigma_\lambda N^\dagger_{x} = 0  \;.
\end{equation}
Since every $\sigma_\lambda \geq 0$, any non-zero $\sigma_\lambda$ appearing in each of these constraints must be proportional to $\ket{w_x}$ itself.
For every $x$, each term in this sum is positive semidefinite, so the only way their sum vanishes is if every nonzero $\sigma_\lambda$ satisfies $N_x \sigma_\lambda=0$, \emph{i.e.}~it is proportional to $\ketbra{w_x}$.
Moreover, since each $\Pi_{+|x}$ (and hence each $\ket{e_x}$) is distinct, $\sigma_\lambda$'s corresponding to strategies that announce $+$ for more than one value of $x$ must be zero.
This means that for every $x$, there is exactly one strategy $\lambda(x)$ that announces $+$ for that setting and null outcomes otherwise, which implies that
\begin{equation}
\sigma_{\lambda(x)} = \epsilon \sqrt{\rho_B} \; \Pi^T_{+|x} \sqrt{\rho_B}
\end{equation}
to reproduce the steered states for the $+$ outcome.
The only remaining strategy, $\tilde{\lambda}$, is the one that announces only the null result. 
From the last two constraints in \eref{eq:LTSDP_one_click}, the corresponding operator must satisfy
\begin{equation}
\sigma_{\tilde{\lambda}} = \rho_B - \epsilon \sum\limits_x \sqrt{\rho_B} \; \Pi^T_{+|x} \sqrt{\rho_B} \geq 0 \;,
\end{equation}
which is equivalent to 
\begin{equation}
\sqrt{\rho_B}\left( I_B - \epsilon \sum\limits_x \Pi^T_{+|x} \right)\sqrt{\rho_B} \geq 0.
\end{equation}
Now, $\sqrt{\rho_B}$ is positive and invertible on its support, so this is equivalent to $\epsilon \sum_x \Pi^T_{+|x} \leq I_B$.
Hence, Eq.~\eqref{eq:spectral_radius_bound}.
\end{proof}

Interestingly, the efficiency threshold for pure-state one-detector steering scenarios is a property \emph{only} of the relative orientation of the ``click'' effects implemented by Alice.
In order to prove ultimate efficiency bounds on one-detector steering scenarios, the following property on the spectra of sums of rank-one matrices is required.
This involves only X, and the Hilbert Schmidt inner product between each pair of projectors, $\langle \Pi_x, \Pi_{x'}\rangle \coloneqq \Tr[\Pi_x^\dagger \Pi_{x'}]$.
\begin{lemma}
\label{lem:eigenvalues_of_projector_sum}
Let $\{\Pi_x\}_{x\in X}$ be a set of rank-one projection operators onto a 2-dimensional subspace of $H^{d_\mathrm{A}}$.
The spectrum of $\sum_x \Pi_x$ contains only two non-zero eigenvalues.
These are
\begin{equation}
\lambda_{\pm} = \frac{1}{2} \left( X \pm \left[ 2\left( \sum\limits_{x, x'} \langle \Pi_x, \Pi_{x'}\rangle \right) - X^2 \right]^{1/2} \right).
\label{eq:eigenvalues_of_projector_sum}
\end{equation}
\end{lemma}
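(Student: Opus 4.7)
The plan is to use the fact that the projectors all live inside a fixed 2-dimensional subspace, which immediately constrains the rank of the sum, and then to pin down the two nonzero eigenvalues via the first two power sums (trace and Frobenius norm).

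First I would observe that if every $\Pi_x$ projects onto a ray inside a common 2-dimensional subspace $V\subseteq H^{d_\mathrm{A}}$, then each $\Pi_x$ vanishes on $V^\perp$, and therefore so does $S\coloneqq \sum_{x}\Pi_{x}$. In particular, $\operatorname{rank}(S)\le\dim V=2$, so $S$ has at most two nonzero eigenvalues $\lambda_+,\lambda_-$. This already establishes the qualitative part of the claim.

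Next I would compute the two elementary symmetric functions of $\lambda_+,\lambda_-$ using traces. Since each $\Pi_x$ is a rank-one projector, $\operatorname{Tr}(\Pi_x)=1$, so
\begin{equation}
\lambda_++\lambda_-=\operatorname{Tr}(S)=\sum_x\operatorname{Tr}(\Pi_x)=X.
\end{equation}
Expanding $S^{2}$ and using the Hilbert--Schmidt inner product $\langle \Pi_x,\Pi_{x'}\rangle = \operatorname{Tr}(\Pi_x^{\dagger}\Pi_{x'})$ (and $\Pi_x=\Pi_x^{\dagger}$), I obtain
\begin{equation}
\lambda_+^{2}+\lambda_-^{2}=\operatorname{Tr}(S^{2})=\sum_{x,x'}\operatorname{Tr}(\Pi_x\Pi_{x'})=\sum_{x,x'}\langle \Pi_x,\Pi_{x'}\rangle.
\end{equation}

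Finally, from the identity $(\lambda_+-\lambda_-)^{2}=2(\lambda_+^{2}+\lambda_-^{2})-(\lambda_++\lambda_-)^{2}$, I would deduce
\begin{equation}
(\lambda_+-\lambda_-)^{2}=2\sum_{x,x'}\langle \Pi_x,\Pi_{x'}\rangle-X^{2},
\end{equation}
and combine with $\lambda_++\lambda_-=X$ to solve for $\lambda_\pm$, recovering Eq.~\eqref{eq:eigenvalues_of_projector_sum}. The only thing to check is that the quantity under the square root is nonnegative, which follows from $\lambda_\pm$ being real eigenvalues of the Hermitian operator $S$ (equivalently, from the Cauchy--Schwarz type bound $\sum_{x,x'}\langle \Pi_x,\Pi_{x'}\rangle\ge X^{2}/2$ for $X$ unit-trace rank-one projectors in a 2D subspace). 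There is no real obstacle here; the main step is simply the rank-at-most-two observation, after which the two power-sum identities close the computation.
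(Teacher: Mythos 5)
Your proposal is correct and follows essentially the same route as the paper's proof: both argue that the common 2-dimensional support forces at most two nonzero eigenvalues, then determine them from the two power sums $\Tr(S)=X$ and $\Tr(S^2)=\sum_{x,x'}\langle\Pi_x,\Pi_{x'}\rangle$ (the paper writes the latter equivalently as $X+\sum_{x\neq x'}\langle\Pi_x,\Pi_{x'}\rangle$). No substantive differences.
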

\begin{proof}
Consider the eigendecomposition $\sum_x \Pi_x = UDU^\dagger$, where $U$ is unitary and $D$ is a diagonal matrix with real entries.
Since all projectors are supported on the same two-dimensional subspace, there exists a set of $d_\mathrm{A}-2$ orthonormal vectors $\{ \ket{v_i} \}$ such that $\mel{v_i}{\Pi_x}{v_i}= 0 ~\forall \  x,i$. 
Therefore, $U\Lambda U^\dagger$ and $(\sum_x \Pi_x)^2 = U\Lambda^2U^\dagger$ have exactly two nonzero eigenvalues, so that $\Tr\sum_x \Pi_x = \lambda_1 + \lambda_2 = X$, and $\Tr\left( \sum_x \Pi_x\right)^2 = \lambda_1^2 + \lambda_2^2 = X + \sum_{x\neq x'} \langle \Pi_x, \Pi_{x'}\rangle$.
These two equations imply that eigenvalues must be the roots of the quadratic equation $-2\lambda_1^2 + 2\lambda_1 X + \sum_{x\neq x'} \langle \Pi_x, \Pi_{x'}\rangle - X^2 = 0$.
The solutions are \eref{eq:eigenvalues_of_projector_sum}.
\end{proof}

This result allows, by an appropriate construction, uncovering the following ultimate bound on the efficiencies required for one-detector steering.
\begin{theorem}[Most inefficient one-detector steering]
\label{th:most_inefficient}
Let $\ket{\Psi_{AB}}$ be any entangled state in $\mathcal{H}^{d_\mathrm{A}} \otimes \mathcal{H}^{d_B}$.
There exists a set of $X$ one-click measurements defined by their click effects $\{ E_{+_|x}\}_{x}$, such that the corresponding one-detector steering test will demonstrate steering whenever
\begin{equation}
\epsilon > \frac{1}{X}.
\label{eq:thm_X_bound}
\end{equation}
\end{theorem}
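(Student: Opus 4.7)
The plan is to combine Lemma~\ref{lem:cutoff_eff} and Lemma~\ref{lem:eigenvalues_of_projector_sum} via an explicit near-coincident family of rank-one projectors. By Lemma~\ref{lem:cutoff_eff}, the one-click assemblage precludes an LHS decomposition (and hence demonstrates steering) precisely when $\epsilon>\left[\maxeig{\sum_x\Pi_{+|x}}\right]^{-1}$, so it suffices to exhibit, for any $\epsilon>1/X$, a choice of $X$ distinct rank-one projectors with $\maxeig{\sum_x\Pi_{+|x}}>1/\epsilon$.

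Since $\ket{\Psi_{AB}}$ is entangled, its Schmidt rank is at least two. I would fix two Schmidt basis vectors $\ket{\alpha_0},\ket{\alpha_1}$ of Alice with nonzero Schmidt coefficients, and restrict attention to the two-dimensional subspace $S=\mathrm{span}\{\ket{\alpha_0},\ket{\alpha_1}\}$. For a small parameter $\delta>0$, pick $X$ distinct unit vectors $\{\ket{e_x^\delta}\}_{x=0}^{X-1}\subset S$ whose pairwise overlaps satisfy $|\braket{e_x^\delta}{e_{x'}^\delta}|\to 1$ as $\delta\to 0^+$ (for instance, a small-angle fan of directions about a fixed vector in $S$). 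Set $\Pi_{+|x}^\delta=\ketbra{e_x^\delta}$; the remaining modes of Alice are absorbed into her null effect.

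Because all $\Pi_{+|x}^\delta$ lie in the two-dimensional subspace $S$, Lemma~\ref{lem:eigenvalues_of_projector_sum} applies and yields the two nonzero eigenvalues $\lambda_{\pm}^\delta=\tfrac{1}{2}(X\pm[2\sum_{x,x'}\Tr(\Pi_{+|x}^\delta\Pi_{+|x'}^\delta)-X^2]^{1/2})$. As $\delta\to 0^+$, every Hilbert--Schmidt inner product approaches $1$, so $\sum_{x,x'}\Tr(\Pi_{+|x}^\delta\Pi_{+|x'}^\delta)\to X^2$ and $\lambda_+^\delta\to X$. Hence for any $\epsilon>1/X$, choosing $\delta$ sufficiently small (yet positive, so distinctness is preserved) yields $\lambda_+^\delta>1/\epsilon$. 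Invoking Lemma~\ref{lem:cutoff_eff} with this assemblage certifies that no LHS decomposition exists, establishing \eqref{eq:thm_X_bound}.

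The main obstacle I anticipate is reconciling the hypothesis of Lemma~\ref{lem:cutoff_eff} --- which insists that the $\Pi_{+|x}$ be distinct --- with the limiting regime in which they become identical. This tension is resolved by observing that Lemma~\ref{lem:cutoff_eff} applies for every $\delta>0$ and that $\left[\maxeig{\sum_x\Pi_{+|x}^\delta}\right]^{-1}$ varies continuously with $\delta$, approaching $1/X$ from above. Thus the infimum $1/X$ is attained only asymptotically but never realized, exactly matching the strict inequality $\epsilon>1/X$ in the statement. The restriction of Alice's effects to $S\subseteq\mathrm{supp}(\rho_\mathrm{A})$ ensures $\sqrt{\rho_B}\ket{e_x^\delta{}^{*}}\neq 0$, so the hypotheses of Lemma~\ref{lem:cutoff_eff} are genuinely met and the steered states $\sigma_{+|x}$ are nontrivial.
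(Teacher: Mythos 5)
Your proposal is correct and follows essentially the same route as the paper's proof: both reduce the problem via Lemma~\ref{lem:cutoff_eff} to making $\maxeig{\sum_x\Pi_{+|x}}$ exceed $1/\epsilon$, construct a fan of $X$ distinct, nearly parallel rank-one projectors in the span of two Schmidt vectors with nonzero coefficients, and apply Lemma~\ref{lem:eigenvalues_of_projector_sum} together with a limiting argument as the overlaps tend to unity. The only difference is cosmetic: the paper evaluates the largest eigenvalue in closed form as $\tfrac{1}{2}\bigl(X+\sin(X\theta/2)/\sin(\theta/2)\bigr)$ for its equally spaced family, whereas you argue by continuity that $\lambda_+^\delta\to X$, which suffices for the strict inequality.
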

\begin{proof}
We give an explicit construction for a set of measurements that approaches the $1/X$ bound from above.
Let $\ket{\alpha_0}$, $\ket{\alpha_1}$ be two vectors for Alice appearing in the Schmidt decomposition with non-zero coefficients.
From these, we construct normalized vectors that are real superpositions $\ket{e^\theta_x} \coloneqq \cos(\mu_x/2) \ket{\alpha_0} + \sin(\mu_x/2) \ket{\alpha_1}$, for $x=0,1,\dots,X-1$.
For some $0 < \theta < \pi/X$, we choose
\begin{equation}
\mu_x \coloneqq \left( x - \frac{X-1}{2} \right)\theta,
\end{equation}
so that the amplitudes are equally spaced by $\theta$, and distinct.
From these, we define the projectors $\Pi^\theta_x \coloneqq \ketbra{e^\theta_x}$, so that each ``click'' effect in each of Alice's POVMs is $E_{+_|x}=\epsilon \Pi^\theta_x$.
Therefore, 
\begin{align}
\sum_{x,x'} \langle \Pi^\theta_x, \Pi^\theta_{x'}\rangle
&= \sum_{x,x'} \cos^2 \left( \frac{(x-x')\theta}{2} \right) \\
&= \frac{1}{2} \left( X^2 + \sum_{j=-(X-1)}^{X-1} (X-|j|)\cos(j\theta) \right) \\
&= \frac{1}{2} \left( X^2 + \left(\frac{\sin(X\theta/2)}{\sin(\theta/2)}\right)^2 \right).
\end{align}
Now, Lemma~\ref{lem:cutoff_eff} implies that steering is possible for this construction when
\begin{align}
\epsilon &> \inf\limits_\theta \left[\maxeig{\sum_x \Pi^\theta_{x}}\right]^{-1},
\label{eq:spectral_radius_theta_family}
\end{align}
and by Lemma~\ref{lem:eigenvalues_of_projector_sum} we evaluate the largest eigenvalue as
\begin{equation}
\maxeig{\sum_x \Pi^\theta_{x}} = \frac{1}{2} \left(X +  \frac{\sin(X\theta/2)}{\sin (\theta/2)}\right).
\end{equation}
The infimum in Eq.~\eqref{eq:spectral_radius_theta_family} is approached in the limit $\theta\rightarrow 0$, proving Eq.~\eqref{eq:thm_X_bound}.
\end{proof}

As mentioned in the main text, taking $X=2$ provides the simplest loophole-free demonstrations of steering.
This permits steering with $W=12$, \emph{and} can be witnessed with one detector on the untrusted party if the sole efficiency surpasses 1/2.

\section{Optimal witnesses for simplest loophole-free steering}
\label{sm:witness_derivation}

For the simplest steering test with $X=2$, here we show that for the measurements used in the construction to prove Theorem~\ref{th:most_inefficient}, we can derive exact steering inequalities in closed form.

We consider a bipartite pure state in Schmidt form $\ket{\Psi_{\mathrm{AB}}} = \sum_{i=0}^{d-1} \sqrt{\lambda_i} \ket{\alpha_i} \otimes \ket{\beta_i}$ for $d \coloneqq \min\{d_\mathrm{A}, d_\mathrm{B}\}$ and $\sum_i | \lambda_i | = 1$.
Consider relabelling of the Schmidt decompositions into an ordering $\lambda_0 \geq \lambda_1 \geq \cdots \geq \lambda_{d-1}$, and assume $\ket{\Psi_{\mathrm{AB}}}$ is entangled, guaranteeing $\lambda_1>0$.
We take Alice's two one-click effects to be $E_{+_|x}=\epsilon \; \Pi^\theta_x = \epsilon \ketbra{e^\theta_x}$ for $x=0,1$, with 
\begin{align}
\ket{e^\theta_x} = \cos(\theta/2) \ket{\alpha_0} + (-1)^x \sin(\theta/2) \ket{\alpha_1}.
\end{align}
For the click outcome $+$, Bob's (unnormalized) steered states are
\begin{align}
\sigma^\theta_{+|x} &= \epsilon \Tr_A \left[ (\Pi^\theta_{+|x} \otimes I_B)\ketbra{\Psi_{AB}} \right] \\
&= \epsilon\left( \lambda_0\cos^2\frac{\theta}{2} \ketbra{\beta_0} 
+ \sqrt{\lambda_0 \lambda_1} \cos\frac{\theta}{2}\sin\frac{\theta}{2} \hat{V}
+ \lambda_1\sin^2\frac{\theta}{2} \ketbra{\beta_1}\right).
\end{align}
Here, $\hat{V} \coloneqq \dyad{\beta_0}{\beta_1} + \dyad{\beta_1}{\beta_0}$ is the flip operator between the two largest eigenvectors of $\rho_B$.

\subsection{The primal problem: an ansatz}
\label{sm:sec_primal_ansatz}

To certify EPR-steerability, we use the feasibility problem for testing if an input assemblage $\{ \sigma_{a|x} \}_{a,x}$ admits an LHS model from \cite{Cav16}.
This is:
\begin{equation}
\begin{aligned}
\quad & \text{max} & & \mu &\\
& \text{s.~t.~} & & \sum_\lambda D(a|x,\lambda)\sigma_\lambda = \sigma_{a|x} & \forall \  a,~x,\\
& & & \sigma_\lambda \geq \mu I_B & \forall \  \lambda\;.
\end{aligned}
\label{eq:SM_eig_SDP}
\end{equation}
Here, the primal variables are the real scalar $\mu$ and the $A^X$ positive semi-definite matrices $\{ \sigma_\lambda \}_\lambda$.
The former places a lower bound on the spectrum of the latter, meaning that a non-negative result implies the existence of an LHS model for the input assemblage.
The four equality constraints read
\begin{align}
\sigma_{0} + \sigma_{1} &= \sigma_{+|0} \label{eq:SM_primal_equalities_first}\\
\sigma_{0} + \sigma_{2} &= \sigma_{+|1} \\
\sigma_{2} + \sigma_{3} &= \rho_B - \sigma_{+|0} \\
\sigma_{1} + \sigma_{3} &= \rho_B - \sigma_{+|1}\;. \label{eq:SM_primal_equalities_last}
\end{align}
Based on numerical results, we make an ansatz for the structure of the primal variables of the form:
\begin{align}
\sigma_{0} &= a_0 \ketbra{\beta_0} + c_0 \ketbra{\beta_1} \\
\sigma_{1} &= a_1 \ketbra{\beta_0} + b_1 \hat{V} + c_1 \ketbra{\beta_1} \\
\sigma_{2} &= a_1 \ketbra{\beta_0} - b_1 \hat{V} + c_1 \ketbra{\beta_1} \\
\sigma_{3} &= c_0 \ketbra{\beta_0} + c_3 \ketbra{\beta_1}
\end{align}
for real coefficients $\{a_0, a_1, b_1, c_0, c_1, c_3\}$.
Also, we make the ansatz that the least eigenvalues of these operators are equal, $\mineig{\sigma_\lambda}=0~\forall \ \lambda$, so that equality holds for matrix inequality in \eqref{eq:SM_eig_SDP}.
Under these conditions, using Eqs.~\eqref{eq:SM_primal_equalities_first}--\eqref{eq:SM_primal_equalities_last} we must have
\begin{align}
a_0+a_1 &= \expval{\sigma_{+|0}}{\beta_0} \\ 
b_1 &= \frac{1}{2} \Tr[\hat{V}\sigma_{+|0}]\\
c_0+c_1 &= \expval{\sigma_{+|0}}{\beta_1} \\ 
a_1+a_3 &= \lambda_0 - \expval{\sigma_{+|0}}{\beta_0} \\
c_1+c_3 &= \lambda_1 - \expval{\sigma_{+|0}}{\beta_1} \\ 
\end{align}
This gives five equations for six unknowns, which we can solve algebraically up to one remaining degree of freedom; we refer the reader to the Mathematica notebook at \cite{code_repo}.
The maximum of the objective $\mu$ attainable in \eqref{eq:SM_eig_SDP} is simply $\mu^\star = \min_\lambda \mineig{\sigma_\lambda} = c_0$.
\newcommand\ssz{\tilde{\sigma}_{z}}
Defining the operator $\ssz \coloneqq \dyad{\beta_0}{\beta_0} - \dyad{\beta_1}{\beta_1}$, this value can be expressed succinctly as
\begin{equation}
\mu^\star = \frac{1}{4} \left(\lambda_0 - \Tr[\ssz\sigma_{+|x}] - \sqrt{\Tr[\hat{V}(\sigma_{+|0} - \sigma_{+|1})]^2 + K_+^2}\right),
\label{eq:primal_solution}
\end{equation}
where $K_+ \coloneqq \lambda_0 - \Tr[\sigma_{+|x}] \geq 0$.
At this point, we have not proven optimality of the solution---only that the ansatz above is a feasible set of variables---\emph{i.e.}~that it satisfies the constraints of problem \eqref{eq:SM_eig_SDP}.
In the next subsection, we prove optimality by finding a solution to the dual optimization problem that attains the same value in Eq.~\eqref{eq:primal_solution}.

\subsection{Solving the dual problem with zero duality gap}
\label{sm:sec_dual_ansatz}

The dual program to \eqref{eq:SM_eig_SDP} is
\begin{equation}
\begin{aligned}
\quad & \text{min} & & \sum\limits_{a,x} \Tr F_{a|x} \sigma_{a|x} &\\
& \text{s.~t.~} & & \sum\limits_{a,x} D(a|x,\lambda)F_{a|x} \geq 0 & \forall \  \lambda,\\
& & & \sum\limits_{a,x,\lambda} \Tr D(a|x,\lambda)F_{a|x} = 1\;. &
\end{aligned}
\label{eq:SM_SE_SDP}
\end{equation}
The dual variables are Hermitian operators $\{ F_{a|x} \}_{a,x}$.
A negative value of the objective function $\sum_{a,x} \Tr F_{a|x} \sigma_{a|x}$ certifies that no LHS decomposition exists for that assemblage $\{ \sigma_{a|x} \}_{a,x}.$

We are required to construct a set of dual variables for the problem \eqref{eq:SM_SE_SDP} that satisfy the constraints and attain the value of the primal derived above, so that $\sum_{a,x} \Tr F^\star_{a|x} \sigma_{a|x} = \mu^\star$.
We make an ansatz for the four $F_{a|x}$ operators in terms of 
\begin{align}
F_{+|0} &= a_{+|0} \dyad{\lambda_0} + b_{+|0} \hat{V} + c_{+|0} \dyad{\lambda_1} \\
F_{+|1} &=-a_{+|0} \dyad{\lambda_0} - b_{+|0} \hat{V} + c_{+|0} \dyad{\lambda_1}  \\
F_{\emptyset | 0} &= a_\emptyset \dyad{\lambda_0} \\
F_{\emptyset | 1} &= 0\;,
\end{align}
for some real numbers $a_{+|0},\,  b_{+|0},\, c_{+|0},\, a_\emptyset$.
The equality constraint in \eqref{eq:SM_SE_SDP} requires
\begin{equation}
a_\emptyset = \frac{1}{2}(1 - 4c_{+|0}).
\end{equation}
Now, the matrix inequality constraints in \eqref{eq:SM_SE_SDP} are
\begin{align}
F_{+|0} + F_{+|1} &\geq 0 \label{eq:SM_dual_mi1}\\
F_{+|0} + F_{\emptyset|1} &\geq 0 \label{eq:SM_dual_mi2}\\
F_{\emptyset|0} + F_{+|1} &\geq 0 \label{eq:SM_dual_mi3}\\
F_{\emptyset|0} + F_{\emptyset|1} &\geq 0\;. \label{eq:SM_dual_mi4}
\end{align}
To solve for the degrees of freedom in the ansatz, we impose that equality holds in each of these inequality constraints, \emph{i.e.} the left side of each constraint has a vanishing least eigenvalue.
This will be true for ~\eqref{eq:SM_dual_mi1} and \eqref{eq:SM_dual_mi4} if $c_{+|0}\geq 0$ and $a_\emptyset \geq 0$, while the remaining two conditions imply
\begin{align}
b_{+|0}^2 &= a_{+|0} c_{+|0} \\
a_\emptyset &= 2 a_{+|0} \\
c_{+|0} &= \frac{1}{4} - 2a_{+|0} \;.
\end{align}
These lead to a natural trigonometric parametrization in terms of a single real parameter $\gamma$.
Defining $a_{+|0} \coloneqq \frac{1}{4} \sin^2(\gamma/4)$, the double angle formulas allow the coefficients to be expressed as
\begin{align}
a_{+|0} &= \frac{1}{8} \left(1-\cos\frac{\gamma}{2}\right) \\
b_{+|0} &= \frac{1}{8} \sin\frac{\gamma}{2}\\
a_\emptyset &= \frac{1}{4} \left(1-\cos\frac{\gamma}{2}\right) \\
c_{+|0} &= \frac{1}{8} \left(1+\cos\frac{\gamma}{2}\right) \;.
\end{align}
It remains to minimize the achievable value of the dual objective function in terms of $\gamma$, which amounts to finding solutions to
\begin{equation}
\frac{\partial}{\partial \gamma} \sum\limits_{a,x} \Tr F^\gamma_{a|x} \sigma_{a|x} = 0.
\end{equation}
Straightforward computation detailed in the Mathematica notebook at \cite{code_repo} shows that this occurs for 
\begin{equation}
\gamma = 2 \tan^{-1} \left( \frac{\sum_x \Tr[\hat{V} (\sigma_{+|0} - \sigma_{+|1})]}{K_+} \right),
\label{eq:SM_gamma_argmin}
\end{equation}
where we have again defined $K_+ \coloneqq \lambda_0 - \Tr[\sigma_{+|x}] \geq 0$.
Therefore, the value attained by the steering functional is
\begin{equation}
\sum\limits_{a,x} \Tr F^\star_{a|x} \sigma_{a|x} = \frac{1}{4} \left(\lambda_0 - \Tr[\ssz\sigma_{+|x}] - \sqrt{\Tr[\hat{V}(\sigma_{+|0} - \sigma_{+|1})]^2 + K_+^2}\right).
\label{eq:SE_val_exact}
\end{equation}
This exactly matches the value obtained by the primal ansatz, Eq.~\eqref{eq:primal_solution}.
To summarize, the exact closed forms of the optimal witness $\{F^\star_{a|x}\}$
are:
\begin{align}
F_{+|0} &= \frac{1}{8} \left[ \left(1-\cos\frac{\gamma}{2} \right)\dyad{\lambda_0} + \sin\frac{\gamma}{2} \hat{V} + \left(1+\cos\frac{\gamma}{2} \right)\dyad{\lambda_1} \right] \\
F_{+|1} &= \frac{1}{8} \left[ \left(\cos\frac{\gamma}{2} - 1\right)\dyad{\lambda_0} - \sin\frac{\gamma}{2} \hat{V} + \left(1+\cos\frac{\gamma}{2} \right)\dyad{\lambda_1} \right] \\
F_{\emptyset | 0} &= \frac{1}{4} \left(1-\cos\frac{\gamma}{2} \right) \dyad{\lambda_0} \\
F_{\emptyset | 1} &= 0\;,
\end{align}
where $\gamma(\epsilon,\theta, \lambda_0, \lambda_1)$ is defined through Eq.~\eqref{eq:SM_gamma_argmin}.

\section{Simplest Bell-nonlocality: Eberhard's inequality}
\label{sec:eberhard_overview}

The lowest bound for the quantum critical detector efficiency for Bell-nonlocality demonstrations $\epsilon_{B}$  was derived by Eberhard \cite{Eberhard1993} to be $\epsilon_B = 2/3$. He showed, using a limiting process, that in the absence of background, there are two-qubit entangled states that can be used to demonstrate nonlocality with detector efficiencies arbitrarily close to the critical efficiency. This bound is notably larger than the steering bound $\epsilon = 1/2$ and, furthermore, in contrast to the steering case,  may only be approached with non-maximally entangled states as illustrated in Fig.~3 in the main text. In this Section, we show how Eberhard's 2/3 efficiency bound may be derived, and describe how the plots for the Bell-nonlocality noise-efficiency trade-off thresholds in Fig.~3 in the main text are obtained. 

\subsection{The Eberhard inequality}
The simplest Bell scenario has two parties, $A$ and $B$,  each with two inputs $x,y \in \{0,1\}$ and two outputs $a_x,b_y \in \{0,1\}$ for every input \cite{Saunders2012}. The object of interest for non-locality demonstrations is the set of joint probabilities $\wp(ab|xy)$ which are connected to the observed statistics via $\wp(ab|xy) \simeq {N_{ab|xy}}/{N_{xy}}$. Here, $N_{ab|xy}$ is the number of counts of type $ab$ for setting choices $xy$  and $N_{xy}$ is the total number of observed counts, \emph{i.e.}, $N_{xy} = \sum_{ab}N_{ab|xy}$, for the settings $xy$.  To allow for inefficient detectors, one may add a third `null' outcome $\emptyset$ for every input, corresponding to the non-detection event.  The experiment thus samples a behaviour $\tilde \wp(ab|xy)$ with three outputs $ a,b \in \{0,1, \emptyset\}$ for every input,  $x,y \in \{0,1\}$. The assumption that this object is consistent with a locally causal model is equivalent to demanding that 
\begin{align}
    \tilde \wp(ab|xy) = \int_{\Lambda} D(a|x,\lambda) D(b|x,\lambda)p(\lambda)d\lambda = \int_{\Lambda_{a|x}\cap \Lambda_{b|y} }p(\lambda)d\lambda = \mu[\Lambda_{a|x}\cap \Lambda_{b|y}]
\end{align}
for all $a,b , x, y$. The LHV space $\Lambda$  may thus be understood to be split to regions $\Lambda_{a|x}$ and $\Lambda_{b|y}$ which specify the probabilities for events corresponding to measures of appropriate subsets $U\subset \Lambda$ via the probability measure $\mu[U] = \int_{U}p(\lambda)d\lambda$. Using the properties of the measure $\mu(\bullet)$, namely that $\mu(\Lambda) = 1,\;  \mu(\cup_i U_i) = \sum_i \mu(U_i)$ for mutually disjoint sets $U_i\subset \Lambda$ and $\mu(U_1) \leq \mu(U_2)$ whenever $U_1 \subset U_2$, we may rephrase Eberhard's argument \cite{Eberhard1993} following the kind of technique used in Ref.~\cite{Ghirardi2008}. 

Note that the sets $\Lambda_{a|x}$ and $\Lambda_{b|y}$ form a disjoint partition of $\Lambda$ for each $x,y$ so that, for example,  $\cup_{a}\Lambda_{a|x} = \Lambda$ and $\Lambda_{a|x}\cap \Lambda_{a'|x} \neq \emptyset$ iff $a=a'$ holds for all $x$. Consider then the measure of the set $ \mathcal{A}=(\Lambda_{a=0|x=0} \cap \Lambda_{b=0|y=0} \setminus  \Lambda_{a\neq0|x=1}) \setminus\Lambda_{b\neq 0 |y=1}$. Since $U\setminus U' = U\cap U'^{c}$, where the superscript $c$ indicates the complementary set, $(\Lambda_{a\neq0 |x=1})^c = \Lambda_{a=0|x=1}$ and similarly $(\Lambda_{b\neq 0|y=1})^c = \Lambda_{b=0|y=1}$,  the measure of this set intuitively represents the likelihood of the variable $\lambda$ to force the outcomes $a_x,b_y=0$ for all the measurements $xy \in \{00,01,10\}$.  The following relation holds:  

\begin{align}
     \mu(\mathcal{A}) \leq \mu(\Lambda_{a=0|x=1}\cap \Lambda_{b=0|y=1}), \label{Equation:SupplementalFromAboveEq.}
\end{align}
as $(U_1\setminus U_2)\setminus U_3 = (U_1\cap U_2^c\cap U_3^c) \subset (U_2^c\cap U_3^c)$. On the other hand, since for any $U, U'$ it holds that $\mu(U\setminus U') = \mu(U) - \mu(U\cap U')$, one can show that:
\begin{align}
 \mu \left( (U_1\setminus U_2) \setminus U_3 \right) &= \mu(U_1 \setminus U_2) - \mu((U_1\setminus U_2) \cap U_3) = \mu(U_1) - \mu(U_1\cap U_2) - \mu(U_1 \cap U_2^c \cap U_3)\\
 & \geq \mu(U_1) - \mu(U_1\cap U_2) - \mu(U_1\cap U_3),
\end{align}
where, to get to the inequality on the second line, the positivity of $\mu(\bullet)$ has been used along with the fact that $(U_1\cap U_2^{c}\cap U_3 )\subset (U_1\cap U_3)$. By setting $U_1 = \Lambda_{a=0|x=0}\cap \Lambda_{b=0|y=0}, U_2 = \Lambda_{a\neq0|x=1}$ and $U_3 = \Lambda_{b\neq0|y=1}$, therefore 
\begin{align}
    \mu(\mathcal{A}) &\geq \mu( \Lambda_{a=0|x=0}\cap \Lambda_{b=0|y=0}) - \mu( (\Lambda_{a=0|x=0}\cap \Lambda_{b=0|y=0})\cap \Lambda_{a\neq0|x=1}) - \mu((\Lambda_{a=0|x=0}\cap \Lambda_{b=0|y=0})\cap \Lambda_{b\neq0|y=1}) \\
    & \geq \mu(\Lambda_{a=0|x=0}\cap \Lambda_{b=0|y=0}) -  \mu( \Lambda_{b=0|y=0}\cap \Lambda_{a\neq0|x=1}) - \mu(\Lambda_{a=0|x=0}\cap \Lambda_{b\neq0|y=1}) \\
    &= \mu(\Lambda_{a=0|x=0}\cap \Lambda_{b=0|y=0}) - \sum_{a' \in \{1,\emptyset \}}\mu(\Lambda_{b=0|x=0}\cap \Lambda_{a = a'|x=1}) -  \sum_{b'\in \{1,\emptyset\}}\mu(\Lambda_{a=0|x=0}\cap \Lambda_{b = b'|y=1}). \label{Equation:SupplementalFromBelowEq.}
\end{align}
Here, to get to the last line, the additivity of $\mu(\cdot)$ over disjoint subsets has been used. Combining the constraints on $\mu(\mathcal{A})$ from~\eqref{Equation:SupplementalFromAboveEq.} and \eqref{Equation:SupplementalFromBelowEq.} and writing them in terms of the probabilities $\tilde{\wp}(ab|xy)$, we get the relation
\begin{align}
   E= \tilde{\wp}(00|11) + \tilde{\wp}(10|10) + \tilde{\wp}(\emptyset0|10) + \tilde{\wp}(01|01) + \tilde{\wp}(0\emptyset|01)  - \tilde{\wp}(00|00) \geq 0 \label{Equation:Eberhard'sInequality}
\end{align}
as a necessary conditon for $\tilde{\wp}(ab|xy)$ to be compatible with a local hidden variable model.  Eq.~\eqref{Equation:Eberhard'sInequality} is Eberhard's inequality \cite{Eberhard1993} written in terms of the probabilities instead of the counts $N_{ab|xy}$. 

\subsection{Optimal noise-efficiency thresholds}
We now retrieve the optimal thresholds allowed by quantum states and measurements. 
Considering first ideal two-qubit measurements in the Bloch representation leads to projection-valued measures with effects of the form $\hat{P}_{a=0|x} = \frac{1}{2}(I +\vec{n}^x\cdot \vec{\sigma} ), \hat{P}_{a=1}=\frac{1}{2}(I-\vec{n}^x\cdot \vec{\sigma})$. Here,
$\vec{n}^x = (n_1^x, n_2^x, n_3^x)$ with $n_i^x \in[0,1]$ and $|\vec{n}^x|=1$ is the Bloch vector of the projection and $\vec{\sigma} = (\sigma_1, \sigma_2, \sigma_3)$ is a vector consisting of the Pauli matrices 
\begin{align}
\sigma_1 = \begin{pmatrix}
    0 & 1 \\
    1&0
\end{pmatrix},
\hspace{0.3cm}
\sigma_2 = \begin{pmatrix}
    0& -i\\
    i& 0
\end{pmatrix}, \hspace{0,3cm} \sigma_3 = \begin{pmatrix}
    1 & 0 \\
    0 & -1
\end{pmatrix}.
\end{align}Similar representation may be taken for Bob with projectors $\hat{P}_{b|y}$, $b_y \in \{0,1\}$ so that probabilities are computed via $\wp(ab|xy) = \Tr[\hat{P}_{a|x} \otimes \hat{P}_{b|y}\rho]$. When imperfect detector efficiency $\epsilon$ is taken into account, the projectors for $a,b \in \{0,1\}$ are replaced by effects $\hat{E}_{a|x} = \epsilon \hat{P}_{a|x}$ and $ \hat{E}_{b|y} = \epsilon \hat{P}_{b|y}$ and a third effect $\hat{E}_{\emptyset|x} = I - \sum_{a\in \{0,1\}}\hat{E}_{a|x} = (1- \epsilon)I$, $\hat{E}_{\emptyset|y} = (1-\epsilon)I$ is added for every input $x,y$ corresponding to the inclusion of null outcome. The probabilities are now computed from $\tilde{\wp}(ab|xy) = \Tr[\hat{E}_{a|x}\otimes \hat{E}_{b|y}\rho]$ where $a,b \in \{0,1, \emptyset\}$ for all $x,y$. 

Let $\hat{B}$ denote a quantum Bell operator, using which inequality \eqref{Equation:Eberhard'sInequality} is expressed as $\langle \hat{B}\rangle = \Tr[\hat{B}\rho] \geq 0$. This inequality can be violated if $\hat{B}$ has at least one negative eigenvalue. Since the eigenvalues are invariant under unitary operations, it is possible to choose measurements with $\vec{n}^{x=1}= \vec{n}^{y=1} = (1,0,0)$ and $\vec{n}^{x=2}=(n_1^x, n_2^x, 0)$, $\vec{n}^{y=2}=(n_1^y, n_2^y, 0)$  without loss of generality. Furthermore, since $ |\vec{n}^x| = |\vec{n}^y| =1$, it is possible to introduce parameters $\phi^x, \phi^y$ so that $n^x_{1} = \mathrm{Re}\left[ e^{- i\phi^x }  \right], n^x_{2} = \mathrm{Im}\left[e^{-i\phi^x} \right]$ and similarly for $n^y_{1/2}$. Now for example, $\hat{E}_{a=0|x=1} = \epsilon/2(I + n^x_1 \sigma_1 + n_2^x \sigma_2)$ is represented by the matrix $\hat{E}_{a=0|x=1} = \epsilon/2 
\left(\begin{smallmatrix}
    1 & e^{i\phi^x} \\
    e^{-i\phi^x} &1
\end{smallmatrix}\right)$. Using the convention of \cite{Eberhard1993} and defining $T = \epsilon/2(e^{i\phi^x}-1)$ and $R = (e^{i\phi^y}-1)$, it is found that the Bell operator can be expressed as 
\begin{align}
    \hat{B} = \epsilon/2\begin{pmatrix} (2- \epsilon )& (1- \epsilon) & (1- \epsilon) &TR- \epsilon \\
    (1- \epsilon ) & (2- \epsilon ) & TR^* - \epsilon & (1- \epsilon) \\
    (1- \epsilon) & T^*R - \epsilon & (2- \epsilon) & (1- \epsilon) \\
    T^*R^* - \epsilon & (1- \epsilon) & (1- \epsilon) &(2- \epsilon) \label{Equation:EberhardsBellOperator}
    \end{pmatrix}\;.
\end{align}
Equation \eqref{Equation:EberhardsBellOperator} is exactly the matrix derived by Eberhard \cite{Eberhard1993}. Following Eberhard, the critical value  $\epsilon = 2/3$ may be  verified by using the fact that the determinant $\mathrm{det}[\hat{B}]$ of the matrix turns from negative to positive when the \emph{last} negative eigenvalue changes sign from negative to positive. This condition can be checked computationally  by sweeping over the angles $\phi^x, \phi^y$. We will show in Sec.~\ref{SupplementalSection:BlackBoxEberhard} of the SM how this bound can be obtained analytically, and that it applies to general no-signalling theories as well. For now, however,  we focus on obtaining the full noise-efficiency trade-off for the quantum model above. 

The effect of white noise in the quantum state can be baked into the matrix  representation of the Bell operator in Eq.~\eqref{Equation:EberhardsBellOperator} by redefining it as $\hat{B}_{\eta} = (1-\eta)\hat{B} + \eta/4  \Tr[\hat{B}]\times I_{\mathcal{H}_{AB}}$, with $\mathcal{H}_{AB} = \mathbb
 C^2 \otimes \mathbb{C}^2$. This follows from the fact that, for a mixed state $\rho$ of the form  $\rho = (1-\eta)\ket{\psi_{AB}}\bra{\psi_{AB}} + \eta \dfrac{1}{4}I_{\mathcal{H}_{AB}}$ the expectation value of the Bell operator $\hat{B}$ becomes 
 \begin{align}
     \Tr[\hat{B}\rho] &= (1-\eta)\Tr[\hat{B} \ket{\psi_{AB}}\bra{\psi_{AB}}] + \dfrac{\eta}{4}\Tr[\hat{B}] = \Tr\left[\hat{B}_{\eta}\ket{\psi_{AB}}\bra{\psi_{AB}}\right]\;. \label{equation:NoisyBelloperatoreEquality}
 \end{align}
Therefore, finding the necessary condition for violation of Eq.~\eqref{Equation:Eberhard'sInequality} for a white-noise mixed state can be re-stated as the operator $\hat{B}_{\eta}$ having at least one negative eigenvalue. Note that $\hat{B}$ and $\hat{B}_{\eta}$ are diagonal in the same basis, and so share the same eigenvectors. Let $\ket{\xi_{\min}}$ be the eigenvector of $\hat{B}_{\eta}$ with the \emph{least} eigenvalue $\mineig{\hat{B}_{\eta}}$, or equivalently, the  eigenvector of $\hat{B}$ with the least eigenvalue $\mineig{\hat{B}}$.  The least eigenvalue $\mineig{\hat{B}_{\eta}}$ of $\hat{B}_{\eta}$ turns zero when  
\begin{align}
    \Tr \left[\hat{B}_{\eta} \ket{\xi_{\min}}\bra{\xi_{\min}} \right]&= (1-\eta) \Tr\left[\hat{B} \ket{\xi_{\min}}\bra{\xi_{\min}} \right]+ \dfrac{\eta}{4}\Tr[\hat{B}]= (1-\eta)\mineig{{\hat{B}}} + \dfrac{\eta \epsilon}{2}(2-\epsilon)  =0 \label{Equation:leastEigenvalueEquation}
\end{align}
so that
\begin{align}
     \eta = \dfrac{-\mineig{\hat{B}}}{\epsilon/2(2-\epsilon) -\mineig{\hat{B}}} = \dfrac{1}{1- \left(\dfrac{\epsilon(2-\epsilon)}{2\mineig{\hat{B}}}\right)}. \label{Equation:Noise-Efficiencyequality}
\end{align}
The value of $\mineig{\hat{B}}$ where this equality holds can be solved from $\hat{B}$  computationally for a given value of $\epsilon$, by minimizing the determinant in terms of the angles $\phi^x, \phi^y$ and computing the least eigenvalue. We provide Python code which performs this computation in \cite{code_repo}. This code gives the black `optimized' curve in Fig.~3 in the main text for optimal trade-off between white noise and detector efficiency for nonlocality violations. 

\subsection{Noise-efficiency thresholds for maximally entangled states}

A different trade-off graph is obtained if the quantum state is fixed to be maximally entangled. With the chosen representation where Bloch vectors of the inputs were of the form $\vec{n}^{x=0} = \vec{n}^{y=0} = (1,0,0)$ and $\vec{n}^{x/y=1} = (n_1^{x/y},n^{x/y}_2,0)$ some states are better choices than others. From Eq.~\eqref{equation:NoisyBelloperatoreEquality} it is seen that the optimal maximally entangled state is generally that which minimizes $\Tr[\hat{B}\ket{\psi^{\max}}\bra{\psi^{\max}}]$. However, for the case of perfect detectors with $\epsilon=1$ there is no need to introduce a null outcome, and the optimal choice for the state is simply that which reaches the quantum bound for the violation of a Bell inequality, such as the Clauser-Horne (CH)-inequality \cite{Clauser1974} 
\begin{align}
  S =   \wp(00|11) - \wp(00|10) - \wp(00|01) -\wp(00|00) +  \wp (a=0|x=0)  +   \wp(b=0|y=0)\geq 0. \label{eQUATION:CH-inequality}
\end{align}
The optimal quantum violation of Eq.~\eqref{eQUATION:CH-inequality} is achieved by the value $\frac{1}{2}(1-\sqrt{2}) \simeq -0.207$. Sets of projective measurements and maximally entangled states are known which reach this value (see e.g. Ref.~\cite{Eberhard1993}), and such choices can be mapped to the chosen parametrization of the form of the inputs.   A state that reaches the maximum value for this parametrization is  $\ket{\Psi^+_{\pi/4}} = \dfrac{1}{\sqrt{2}}( \ket{HV} + e^{i \pi/4} \ket{VH})$ with Bloch vectors $\vec{n}^{x=0}=\vec{n}^{y=0}=(1,0,0)$ , $\vec{n}^{x=1}=(0,1,0)$ and $\vec{n}^{y=1}=(0,-1,0)$. The same state and measurement angles remain optimal even in the presence of detector inefficiency and with the addition of white noise. To see this, note that the no-signalling constraints $\sum_{a\in \{\emptyset, 0,1 \}}\tilde{\wp}(a0|x 1)= \tilde{\wp}(b=0|y=1)$ and $\sum_{b\in \{\emptyset, 0, 1\}} \tilde{\wp}(0b|1y) = \tilde{\wp}(a=0|x=1)$ imply that the Eberhard inequality \eqref{Equation:Eberhard'sInequality} may be equivalently written as 

\begin{align}
  E=  \tilde{\wp}(00|11) - \tilde{\wp}(00|10) - \tilde{\wp}(00|01) -\tilde{\wp}(00|00) +  \tilde{\wp}(a=0|x=0)  +   \tilde{\wp}(b=0|y=0)\geq 0
\end{align}
and so, by taking into account the form of the measurements, it follows that 
\begin{align}
    E &= \epsilon^2\left[\wp(00|11) - \wp(00|10) - \wp(00|01) -\wp(00|00)\right] +\epsilon [\wp (a=0|x=0)  +   \wp(b=0|y=0)] \\
    &= \epsilon^2 \left[ S - \wp (a=0|x=0)  -  \wp(b=0|y=0) \right] + \epsilon [\wp (a=0|x=0)  +   \wp(b=0|y=0)] \geq 0.
\end{align}
Note that here $S$ and the probabilities denoted by $\wp$ refer to the CH-expression of Eq.~\eqref{eQUATION:CH-inequality} evaluated with the ideal detector model. 
For a state $\rho = (1-\eta)\ket{\psi^{\max}}\bra{\psi^{\max}} + \eta \dfrac{1}{4}I_{\mathcal{H}_{AB}} $ (and rank 1 effects) the local marginals $\wp (a=0|x=0), \, \wp(b=0|y=0) $  equal $1/2$. Furthermore, the expression $S$ decomposes into $(\eta S^{\max}) + 1/2(1-\eta)$, where $\eta S^{\max}$  is the contribution from the maximally entangled component and $1/2(1-\eta)$ from the maximally mixed component of the state $\rho$. Thus, the relation
 \begin{align}
     \epsilon[(\eta S^{\max}+1/2 (1-\eta)-1)] \geq -1 \label{equation:noise-efficiencyforMAXENT}
 \end{align}
 is obtained for every maximally entangled state mixed with white noise. The smallest $\epsilon$ for a given $\eta$ for which a violation is possible is obtained when the expression inside the brackets is minimized, or equivalently, when $S^{\max}$ obtains its smallest value i.e. $\frac{1}{2}(1-\sqrt{2})$, which can be reached in the chosen parametrization with the state $\ket{\Psi^+_{\pi/4}} = \dfrac{1}{\sqrt{2}}( \ket{HV} + e^{i \pi/4} \ket{VH})$ as stated before. Thus the trade-off encoded in Eq.~\eqref{equation:noise-efficiencyforMAXENT} with value $S^{\max} = \frac{1}{2}(1-\sqrt{2})$ is optimal. Plotting the equality gives the green curve in Fig.~3 of the main text.

\subsection{Efficiency threshold for the simplest scenario\label{section:SM,EfficiencyIntheSimplestScenario}}
The Eberhard inequality \eqref{Equation:Eberhard'sInequality}  
deals with the detector imperfections by introducing a third outcome corresponding to the null event. This arguably increases the complexity $W = A^XB^Y$ of the experiment in consideration by adding more patterns. This can be remedied by noting that the Eberhard inequality   is equivalent to the CH-inequality \cite{Clauser1974} of Eq.~\eqref{eQUATION:CH-inequality} if the null output is assigned to the outcome $1$ for every input. Indeed, suppose this assignment is done,  so the three outcome behaviour containing the probabilities $\tilde\wp(ab|xy)$ maps to the two-outcome behaviour with probabilities $\wp'(ab|xy)$ defined by 
\begin{align}
    \wp'(ab|xy) = \begin{cases}
        \tilde\wp(00|xy) & \textrm{if } ab=00\\
        \tilde \wp(01|xy)+ \tilde \wp(0 \emptyset|xy)& \textrm{if } ab=01\\
        \tilde\wp(10|xy) +\tilde\wp(\emptyset 1|xy)& \textrm{if } ab=10\\
        \tilde \wp(11|xy) + \tilde{\wp}(\emptyset \emptyset|xy) &\textrm{if } ab=11
    \end{cases}
\end{align}
for all $xy$. 
The distributions $\wp'(ab|xy)$ obey no-signalling, which  follows from the no-signalling constraints $\sum_{a\in \{\emptyset, 0,1 \}}\tilde{\wp}(a0|x 1)= \tilde{\wp}(b=0|y=1)$ and $\sum_{b\in \{\emptyset, 0, 1\}} \tilde{\wp}(0b|1y) = \tilde{\wp}(a=0|x=1)$. Under this mapping, if the behaviour consisting of $\tilde{\wp}(ab|xy)$ obeys the Eberhard inequality of $\eqref{Equation:Eberhard'sInequality}$ then the behaviour with $\wp'(ab|xy)$ defined equivalently obeys the inequality 
\begin{align}
\wp'(00|11)+ \wp'(10|10) + \wp'(01|01) - \wp'(00|00) \geq 0.
\end{align}
By using $\wp'(01|01) = \wp'(a=0|x=0) - \wp'(00|01)$ and $\wp'(10|10)=\wp'(b=0|y=0) - \wp'(00|10)$ this is seen to be equivalent to 
\begin{align}
    \wp'(00|11) - \wp'(00|01) - \wp'(00|10) - \wp'(00|00) +\wp'(a=0|x=0) +\wp'(b=0|y=0) \geq 0,
\end{align}
which is the CH-inequality of Eq.~\eqref{eQUATION:CH-inequality} evaluated with the distributions obtained by use of the assignment strategy. Hence, under this map, the Eberhard inequality is violated by $\tilde\wp(ab|xy)$ if and only if the CH inequality is violated by $\wp'(ab|xy)$.  Therefore the same critical detector efficiency $\epsilon > 2/3$ is valid in the simplest Bell scenario, which has complexity $W= A^XB^Y = 16$.  In general, the bounds for the  Eberhard \cite{Eberhard1993}, CH \cite{Clauser1974} (and also the CHSH \cite{Cla69}) inequalities may be different depending on how one deals with the null outcomes \cite{Czechlewski18}, namely, whether the discard or assignment strategy is chosen. In Ref.~\cite{Czechlewski18}, it was shown that when the \emph{assignment strategy} is used, as is the case in this argument, those inequalities remain essentially equivalent up to no-signalling and normalization and hence the threshold bounds match.

\section{Eberhard's bound is the minimum detector efficiency required for Bell-nonlocality demonstrations independently of quantum physics \label{SupplementalSection:BlackBoxEberhard}}

Interestingly, the bound $\epsilon = 2/3$ may, in fact, be identified as the minimum threshold detector efficiency required in arbitrary no-signalling theories for the demonstration of Bell-nonlocality in the simplest setup. For example, see Ref.~\cite{Massar2003}, where an explicit method to construct Local Hidden Variable models for efficiencies $\epsilon \leq 2/3$ is derived assuming no-signalling; or the proof of Theorem 6 in Ref.~\cite{Larsson2001}, which shows the necessity of $\epsilon >2/3$ for the violation of the Clauser-Horne inequality \cite{Clauser1974}  and also analytically constructs an appealing quantum model that approaches the bound in the appropriate limit. For completeness, we show a simple technique to derive this bound for the CHSH-inequality \cite{Cla69}, which is equivalent to the CH and Eberhard inequalities when the assignment strategy is used \cite{Czechlewski18}.

Let $P^{\rm PR}$ denote a behaviour, \emph{i.e.}~a collection of distributions $ P^{\rm PR}(ab|xy)$ with $x,y \in \{0, 1 \}$ and $a,b \in \{0, 1 \}$, of the  Popescu-Rohrlic (PR)-type \cite{Popescu1994}:
\begin{align}
    P^{\rm PR}(ab|xy) = \begin{cases}
        \frac{1}{2} & \textrm{ if } a=b \textrm{ and } xy \in \{00,01,10\} \vspace{1ex}  \\
        \frac{1}{2} & \textrm{ if } a\neq b \textrm{ and } xy=11 \;.
    \end{cases}
 \end{align}
The PR behaviour has the property that the CHSH-expression $C$
\begin{align}
  C = C_{00} + C_{01} + C_{10} - C_{11}, \label{CHSHinequality}
\end{align}
with $C_{xy} = P(00|xy) +P(11|xy) - P(10 |xy) - P(01|xy)$ reaches the algebraic upper bound $C =4$ when evaluated with respect to $P^{\rm PR}$, while every behaviour compatible with Local Causality \cite{Bell76} satisfies $C\leq 2$. 

When each detector, for each pair of measurement settings $x,y$, has a probability $\epsilon \in [0,1]$ of working as intended, and probability $1-\epsilon$ of providing a null outcome $\emptyset$ when a photon is incident, the behaviour $P^{\rm PR}$ can be thought to map to the 9-outcome behaviour defined by
\begin{align}
    P_{\epsilon,\emptyset}(ab|xy) = \begin{cases}
        \epsilon^2 P^{\rm PR}(a,b|x,y) & \\
        \epsilon (1-\epsilon) P^{\rm PR}(a|x) & \textrm{ if } b_y = \emptyset \\
        (1-\epsilon)\epsilon P^{\rm PR}(b |y) & \textrm{ if } a_x = \emptyset\\
        (1-\epsilon)^2 & \textrm{ if } a_x,b_y = \emptyset \label{9-outcomeBehaviour}\;.
    \end{cases}
\end{align}
This case can be mapped back to the two-outcome per input scenario by locally assigning the null outcomes to either of the other possible outcomes. We could use the strategy of assigning all null outcomes to 1 as in Section \ref{section:SM,EfficiencyIntheSimplestScenario}, however the more general case with arbitrary assignment strategies is treated just as easily.    The most general assignment strategies include \emph{local} mapping probabilities  $P_{x/y}(\emptyset \mapsto +1)$ and $P'_{x/y}(\emptyset \mapsto -1) = 1-P'_{x/y}(\emptyset \mapsto +1) $ :=$P'(a/b|x/y)$, by virtue of which Eq.~\eqref{9-outcomeBehaviour} collapses to
\begin{align}
    P_{\epsilon}(ab|xy) &= 
        \epsilon^2 P^{\rm PR}(ab|xy) + \epsilon(1-\epsilon) P^{\rm PR}(a|x)P'(b|y) 
        + \epsilon(1-\epsilon) P'(a|x)P^{\rm PR}(b|y) + (1-\epsilon)^2 P'(a|x)P'(b|y)\;.
\end{align}
 Since the CHSH inequality is linear in probabilities, plugging the distribution $P_{\epsilon}$ into it would essentially amount to a linear sum of the individual terms. The marginals $P^{\rm PR}(a|x), P^{\rm PR}(b|y)$ of the PR-behaviour are unbiased, and so the cross terms $\epsilon(1-\epsilon) P^{\rm PR}(a|x)P'(b|y) 
        $ and $ \epsilon(1-\epsilon) P'(a|x)P^{\rm PR}(b|y)$ vanish for the expression $C(P_{\epsilon})$. The last term proportional to $P'(a|x)P'(b|y)$, on the other hand, would evaluate to a number between $[-2,2]$ since it arises from the local post-processing. The \emph{worst case scenario} would be when the term assumes the value $+2$ since then both of the remaining terms contribute similarly to the violation of the CHSH-expression. From here, a necessary condition  for the violation is obtained as $
       \epsilon^2\times 4  + (1-\epsilon)^2  \times 2 >2  \Leftrightarrow \epsilon(3\epsilon - 2) > 0$ $\Leftrightarrow \epsilon > 2/3$, the Eberhard bound. 

       The above simple argument assumes that the behaviour in question is the PR-box, and hence may \emph{a priori} not be sufficient to (independently, anyway) guarantee that no value less than or equal to $\epsilon = 2/3$ is sufficient. Indeed, the PR-box has the `special' property of unbiased marginals. A simple observation using the properties of no-signalling correlations can be used to generalize this for the case of any no-signalling behaviours, including those reproducible in quantum mechanics. 

       In particular, the set of no-signalling correlations is a convex polytope, with 8 non-classical vertices corresponding to relabelings of the PR-behaviour \cite{Barret2005nonlocalCorrelations}. The PR-behaviours are in one-to-one correspondence with the CHSH-Bell inequalities, which form the facets of the Bell-local polytope, and so every behaviour $P^{\rm NS}(ab|xy)$ which violates a given CHSH-inequality, in particular the inequality $C \leq 2$ where $C$ is as in Eq.~\eqref{CHSHinequality}, may be expressed as the convex combination 
       \begin{align}
           P^{\rm NS}(ab|xy) = \sum_{i}(1 -\lambda_i')P^{\rm PR}(ab|xy) + \sum_i \lambda_i'P_i^{\rm L}(ab|xy) \equiv (1-\lambda) P^{\rm PR}(ab|xy) + \lambda P^{\rm L}(ab|xy) \;, \label{equation:LambdaDependentExpansion}
       \end{align}
where $P^{\rm L}_i(ab|xy)$ and $P^{\rm L}(ab|xy)$ are some (Bell-local) behaviours which do not violate the CHSH inequality. Running exactly the kind of argument as before now leads to an expression which depends on both $\lambda$ and $\epsilon$. Namely now

\begin{align}
    P_{\epsilon}^{\rm NS}(ab|xy) = \epsilon^2 P^{\rm NS}(ab|xy) + \epsilon(1-\epsilon)P^{\rm NS}(a|x)P'(b|y) + \epsilon(1-\epsilon)P^{\rm NS}(a|x)P'(b|y) + (1-\epsilon)^2P'(a|x)P'(b|y)\;.
\end{align}
The difference is that now the cross terms do not vanish, as the local marginals in the convex expansion \eqref{equation:LambdaDependentExpansion} may be unbiased. Furthermore, the first and last terms acquire $\lambda-$dependence. Other than that, the steps are similar. From the linearity of the CHSH expression and considering again the worst-case scenario where \emph{every} local distribution reaches the bound 2, the necessary condition for violation of $C(P_{\epsilon}^{\rm NS})>2$ is obtained as
\begin{align}
    \epsilon^2 \left[ (1-\lambda)\times 4 + \lambda \times 2\right] + 2\epsilon(1- \epsilon) \left[\lambda \times 2 \right] + (1-\epsilon)^2\times 2 > 2
\end{align}
which is equivalent to 
\begin{align}
    \epsilon(1-\lambda) \times \left[ 6\epsilon - 4 \right] >0 \Rightarrow \epsilon > \dfrac{2}{3}\;,
\end{align}
where the fact that $\epsilon(1-\lambda) >0$ for all $\epsilon, \lambda \in (0,1)$ was used. This shows that the Eberhard bound $\epsilon = 2/3$ is indeed the critical detection efficiency threshold for Bell-nonlocality demonstrations for all no-signalling behaviours in the simplest scenario, not just those reproducible in quantum physics. 

\section{Experimental details} 
\label{sm:experiment_details}

\subsection{Experimental setup}
\label{subsm:setup}
The source implemented for our experiment generates tunable polarization-entangled photon pairs of 1550 nm via type-II spontaneous parametric down conversion (SPDC) happening in a periodically poled potassium titanyl phosphate (PPKTP) crystal embedded in a beam-displacer interferometer, pumped with a continuous-wave laser of 775 nm \cite{Tis18, shalmStrongLoopholeFreeTest2015,villegas2024nonlocality}. The characterization of the entangled photons from the source is done by performing quantum state tomography \cite{whiteMeasuringTwoqubitGates2007} to reconstruct the density matrix of the entangled state obtained. The reconstructed state has a fidelity of $0.9953 \pm 0.0006$ with the maximally entangled state. The generated photons from the pair are sent to Alice and Bob.

The detection efficiency of Alice is quantified by the heralding efficiency, which is measured by dividing the sum of coincidences between Alice's and Bob's outcomes by the sum of singles obtained on Bob's side. The heralding efficiency is continuously monitored by calculating the sum of the heralding efficiencies of Alice's events recorded when measuring along the two orthogonal polarization axes: one axis corresponding to Alice's measurement setting, and the orthogonal axis obtained by rotating the half-wave plate (HWP) to $+45^{\circ}$ from that position. It is important to note that the coincidences from the orthogonal axis measurement are not used to calculate the steering inequality violation. 
The heralding efficiency of the source is optimized by adjusting the setup parameters, including pump and detection beam waists, and the efficiency of the superconducting nanowire single-photon detectors (SNSPDs) used for detection. To vary the detection efficiencies for the steering curves in Fig.~2 of the main text, we used the polarization dependence of the SNSPD efficiency.
Bob's measurement is realized as a POVM, with trine elements $E_0,E_1,E_2$ represented by the following unit vectors lying in the X-Z plane of the Bloch sphere:

\begin{subequations}\label{POVM}
    \begin{align}
    E_0 &= |V\rangle\\
    E_1 &= \frac{\sqrt{3}}{2}|H\rangle + \frac{1}{2}|V\rangle\\
    E_2 &= \frac{\sqrt{3}}{2}|H\rangle - \frac{1}{2}|V\rangle \;.
\end{align}
\end{subequations}
 $E_0$ is measured at the output 1, which is the reflecting port of the partially polarizing beam splitter (PPBS), with transmissivities $\tau_{V}=\sqrt{1/3}$, $\tau_{H} = 1$, and reflectivities $r_{V}=\sqrt{2/3}$, $r_{H} = 0$ for vertically and horizontally polarized light, respectively. A HWP at $22.5^\circ$ rotates the photons coming through the transmitted arm of the PPBS, and $E_1$  and $E_2$ are implemented at the reflected (output 2) and transmitted side (output 3) of the PBS as in Fig.~1 of the main text. 
 
\subsection{Evaluating the optimal steering inequality from data}
\label{sm:MLE}

To calculate the value of steering inequalities, the coincidences between Alice's and Bob's detections are measured with Bob's photons as heralding photons. The used inequalities are optimized from the corresponding assemblages of Bob, determined by a Maximum Likelihood Estimation (MLE) \cite{banaszek1999maximum, villegas2025quantum} on the observed measurement, guaranteeing the assemblage is no-signalling.

The experimental data obtained are in the form of singles and coincidence counts. 
Singles counts for each detector represent the total number of photons detected by them during the integration window, which is the time duration during which data is collected when the apparatus remains in a single measurement setting (3 seconds in our experiment). 
The coincidence counts between Alice and each of Bob's three outcomes are the instances where Alice detects a photon simultaneously with Bob's corresponding detector. 
The probabilities used to calculate the steering parameters include Alice's marginals and Bob's conditionals, represented by $P(a|x)$ and $P(b|a,x)$, respectively. 
The marginals represent the probability that Alice obtains an outcome ($a=+$ or $a=\emptyset$) given a particular setting ($x=0$ or $x=1$).  
Table \ref{tab:probabilities} shows how we calculate the probability values from the raw experimental data for a measurement setting x. 

\begin{table}[h]
\centering
\renewcommand{\arraystretch}{2.3} 
\setlength{\tabcolsep}{12pt} 
\begin{tabular}{|c|c|}
\hline
\textbf{Probabilities} & \textbf{Calculation of probability from singles and coincidences} \\ [6pt]
\hline
$P(a=+|x)$ &
{\small $\displaystyle \frac{C(A,B_1)+C(A,B_2)+C(A,B_3)}{S(B_1)+S(B_2)+S(B_3)}$} \\ [6pt]
\hline
$P(a=\emptyset|x)$ &
{\small $\displaystyle 1 - \frac{C(A,B_1)+C(A,B_2)+C(A,B_3)}{S(B_1)+S(B_2)+S(B_3)}$} \\ [6pt]
\hline
$P(b_1|a=+,x)$ &
{\small $\displaystyle \frac{C(A,B_1)}{C(A,B_1)+C(A,B_2)+C(A,B_3)}$} \\ [6pt]
\hline
$P(b_2|a=+,x)$ &
{\small $\displaystyle \frac{C(A,B_2)}{C(A,B_1)+C(A,B_2)+C(A,B_3)}$} \\ [6pt]
\hline
$P(b_3|a=+,x)$ &
{\small $\displaystyle \frac{C(A,B_3)}{C(A,B_1)+C(A,B_2)+C(A,B_3)}$} \\ [6pt]
\hline
$P(b_1|a=\emptyset,x)$ &
{\small $\displaystyle \frac{S(B_1)-C(A,B_1)}{S(B_1)+S(B_2)+S(B_3) - [C(A,B_1)+C(A,B_2)+C(A,B_3)]}$} \\ [6pt]
\hline
$P(b_2|a=\emptyset,x)$ &
{\small $\displaystyle \frac{S(B_2)-C(A,B_2)}{S(B_1)+S(B_2)+S(B_3) - [C(A,B_1)+C(A,B_2)+C(A,B_3)]}$} \\ [6pt]
\hline
$P(b_3|a=\emptyset,x)$ &
{\small $\displaystyle \frac{S(B_3)-C(A,B_3)}{S(B_1)+S(B_2)+S(B_3) - [C(A,B_1)+C(A,B_2)+C(A,B_3)]}$} \\ [6pt]
\hline
\end{tabular}
\caption{Probability calculation from raw data. $C(A, B_1)$, $C(A, B_2)$, and $C(A, B_3)$ represent coincidences between Alice's and Bob's outcomes. $S(A)$, $S(B_1)$, $S(B_2)$, and $S(B_3)$ represent the singles counts of Alice, and of the first, second, and third detectors of Bob, respectively.}
\label{tab:probabilities}
\end{table}

We test for violation of a steering inequality using the following method.
From the experimental events, we compute all the probabilities in Table~\ref{tab:probabilities}.
From these, we construct a \emph{most-likely} candidate assemblage $\{\sigma_{a|x}\}_{a,x}$ that is consistent with the non-signalling constraint $\sum_a \sigma_{a|x} = \rho_B,~\forall \  x$. 
Importantly, only the probabilities appearing in Table~\ref{tab:probabilities}, and a tomographically characterized POVM by Bob ${E_b}_{b=0}^2$ are utilized---no assumption is made about Alice's measurement device.
Using $p(a,b|x) = p(a|x)p(b|a,x)$, we construct the likelihood function for observing the raw data from ostensibly measuring the ensembles $\{\sigma_{a|x}\}$ with Bob's POVM as $\mathcal{L} = \prod _{a,b,x} \Tr[E_b \sigma_{a|x}]^{p(a,b|x)}$.
Taking a logarithm of $\mathcal{L}$ allows computation of a maximum likelihood candidate assemblage via a single semidefinite program instance:
\begin{equation}
\begin{aligned}
\quad & \text{max} & & \sum\limits_{a,b,x} p(a,b|x)~\log \Tr [E_b\sigma_{a|x}] &\\
& \text{s.~t.~} & & \sigma_{a|x} \geq 0 ~&\forall \ ~a,x~\\
& & & \sum\limits_a \sigma_{a|x} = \rho_B~ \forall \ ~x & \\
& & & \Tr[\rho_B] = 1 \;.
\end{aligned}
\label{eq:MLE}
\end{equation}
Each run, we compute these SDPs using the MOSEK solver~\cite{mosek}, and provide our Python implementation in the repository accessible at Ref.~\cite{code_repo}.
The optimization variables $\{\sigma_{a|x}\}_{a,x}$ and $\rho_B$ are passed to the steering inequality in Eq.~\eqref{eq:SE_val_exact}.

\subsection{Experimental steering with non-maximally entangled states} 
\label{subsm:nonmax}
We study how a non-maximally entangled state behaves in the simplest steering scenario. For that, we generate a state with $\alpha = 0.31~$radians  in 
Eq.~(5) of the main text,
through rotating the pump HWP  before the first beam displacer (BD) in the source 
(see Fig.~1 in the main text).
This state has a concurrence value of 0.4399, much lower than the unit concurrence for a maximally entangled state, indicating a lower entanglement in the state. The fidelity of the experimentally generated non-maximally entangled state with the closest ideal non-maximally entangled state in Eq.~(5) (with the same $\alpha$ value as in the main text) is
$0.9928 \pm 0.0002$. Fig.~\ref{unbalanced} shows that the generated non-maximally entangled state also shows negative steering parameters, indicating its steerability.

\begin{figure}[h]
\begin{center}
\includegraphics[width=0.65\textwidth]{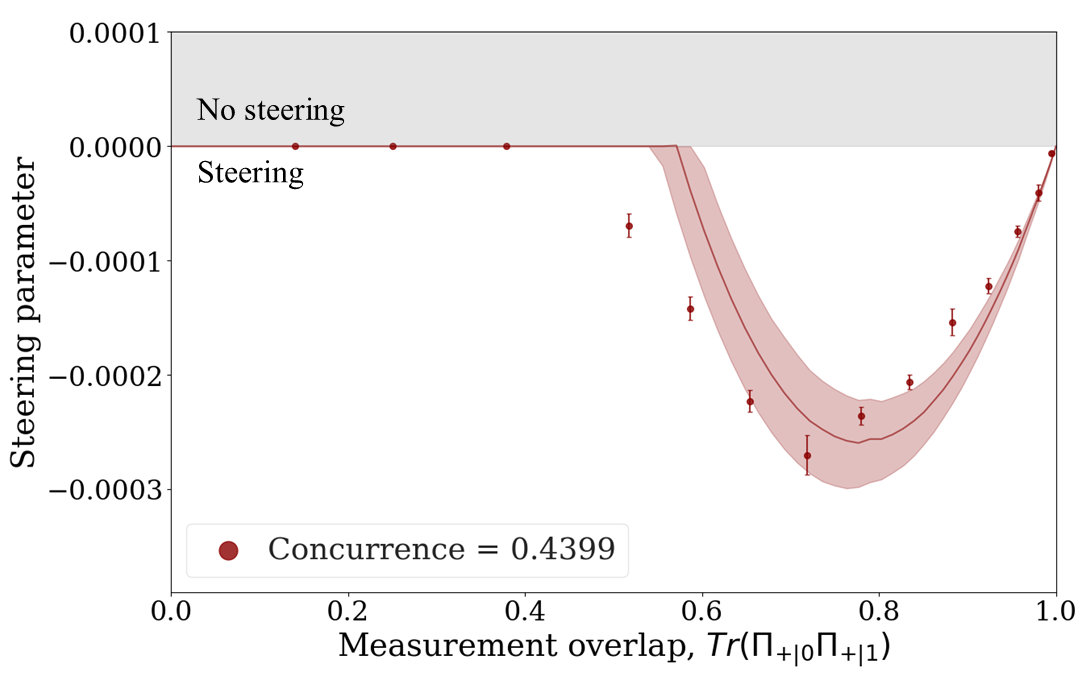}
\caption{\footnotesize 
Steering parameters versus Alice's measurement overlap for non-maximally entangled states. 
Points in the white region indicate a steering violation. 
The shaded regions of the curves represent theoretical predictions for an ideal state and POVM, based on the experimentally measured efficiency.
The band reflects $\pm1$ standard deviation uncertainty in efficiency. 
Markers represent the experimental data, along with the associated error estimated by repeating the measurements 10 times.}
\label{unbalanced}
\end{center}
\end{figure}

\subsection{Steering parameter values for maximally entangled states}
\label{subsm:steeringvalues}
Here, we provide the numerical values of the experimentally obtained  steering parameters in Fig.~2 of the main text. The lowest point on each curve is where the violations are highest, and the overlap values are near optimal. Thus, we provide the numerical values of the minimum steering parameters for all efficiencies along with their uncertainties in Table~\ref{tab:violation  values}. The uncertainties indicate the error of the mean value of the steering parameter calculated from 10 iterations of the experiment.

\begin{table} [H]
\renewcommand{\arraystretch}{2.3}
\setlength{\tabcolsep}{12pt}  
    \centering
    \begin{tabular}{|c|c|c|}
        \hline
        \textbf{Efficiency} & \textbf{Minimum steering parameter} & \textbf{Uncertainty in steering parameter}\\
        \hline
         $0.615 \pm 0.004$ & $-3.3 \times 10^{-3}$  & $0.3 \times 10^{-3}$ \\
         \hline
         $0.578 \pm 0.004$ & $-1.3 \times 10^{-3}$ & $0.1 \times 10^{-3}$ \\
         \hline
         $0.544 \pm 0.003$ & $-3.7 \times 10^{-4}$ & $0.4 \times 10^{-4}$ \\
         \hline
         $0.516 \pm 0.004$ & $-7.8 \times 10^{-5}$ & $0.4 \times 10^{-5}$  \\
         \hline
    \end{tabular}
    \caption{The values of the most negative steering parameter observed for each of the efficiencies included in Fig.~2 of the main text.}
    \label{tab:violation  values}
\end{table}

\subsection{Estimating noise robustness for experimental violations}
\label{sec:noisy_assemblage_witness}

We now investigate how noise robust our steering inequality violations are in the simplest scenario.
This computation results in the four data points illustrated in Fig.~3 of the main text, which witness steering approaching $\epsilon \rightarrow 1/2$.
To compare with the robustness of Eberhard's inequality, we estimate the fraction of white noise $\eta$ permissible in the underlying state, while maintaining steering inequality violation.
Mathematically, this corresponds to Alice performing her measurements on the state 
\begin{equation}
(1-\eta)\ketbra{\Psi_{AB}} + \eta\frac{I_A}{d_A}\otimes \frac{I_B}{d_B}.
\end{equation}
At the level of the assemblage produced for Bob, this modifies the assemblage according to
\begin{equation}
\sigma_{a|x} \rightarrow (1-\eta) \sigma_{a|x} + \eta\Tr[E_{a|x}] \frac{I}{d_A d_B}.
\end{equation}
We can compute the WNR $\eta^\star$ by the semidefinite program:
\begin{equation}
\begin{aligned}
\quad & \text{min} & & \eta &\\
& \text{s.~t.~} & & \sum_\lambda D(a|x,\lambda)\sigma_\lambda = (1-\eta) \eta\sigma_{a|x} + \eta \frac{\Tr[E_{a|x}]}{d_A} \frac{I}{d_B}& \forall \  a,~x,\\
& & & \sigma_\lambda \geq 0 & \forall \  \lambda\;.
\end{aligned}
\label{eq:SM_eig_SDPwnr}
\end{equation}
This procedure gives the four experimental data points appearing in Fig.~3 of the main text.

\end{document}